\definecolor{light-gray}{gray}{0.96}
\definecolor{LightCyan}{rgb}{0.88,1,1}
\newcommand{\defn}[1]{\textbf{\emph{#1}}}
\definecolor{darkblue}{rgb}{0.0,0.0,0.75}
\definecolor{ultramarine}{rgb}{0.07, 0.04, 0.76}
\newcommand{\B}{\textsc{BEB}\xspace}
\newcommand{\LB}{\textsc{Log-Backoff}\xspace}
\newcommand{\LLB}{\textsc{LogLog-Backoff}\xspace}
\newcommand{\STB}{\textsc{Sawtooth-Backoff}\xspace}
\newcommand{\totaltime}{execution time\xspace}
\newcommand{\Totaltime}{Execution time\xspace}
\newcommand{\detailedSimulator}{detailed simulator\xspace}
\newcommand{\theoreticalSimulator}{theoretical-model simulator\xspace}
\newtheorem{fact}{Fact}
\newcolumntype{M}[1]{>{\centering\arraybackslash}m{#1}} 
\begin{document}

\vspace{-2cm}

\title{Windowed Backoff Algorithms for WiFi \thanks{This research is supported by the National Science Foundation grant CNS-1816076 and the U.S. National Institute of Justice (NIJ) Grant 2018-75-CX-K002.}
}
\subtitle{Theory and Performance under Batched Arrivals}


\author{\mbox{William C. Anderton $\cdot$ Trisha Chakraborty $\cdot$ Maxwell Young}}


\institute{ \at W. C. Anderton\\
              \email{wca36@msstate.edu}   
              \and     
           T. Chakraborty \at
              Department of Computer Science and Engineering, Mississippi State University, MS, USA\\
               \email{tc2006@msstate.edu}
           \and
               M. Young (contact author)\at  Department of Computer Science and Engineering, Mississippi State University, MS, USA\\
               \email{myoung@cse.msstate.edu}
}

\date{}

\maketitle

\begin{abstract}
\hspace{-3pt}Binary exponential backoff (BEB) is a decades -old algorithm for coordinating access to a shared channel. In modern networks, BEB plays a crucial role in WiFi  and other wireless communication standards.

Despite this track record, well-known theoretical results indicate that under bursty traffic, BEB yields poor makespan, and superior algorithms are possible. To date, the degree to which these findings impact performance in wireless networks has not been examined.

Here, we investigate a challenging case for BEB: a single burst (\defn{batch}) of packets that simultaneously contend for access to a wireless channel. Using Network Simulator 3, we incorporate into IEEE 802.11g several newer algorithms that have  theoretically-superior makespan guarantees. Surprisingly, we discover that these newer algorithms underperform BEB.

Investigating further, we identify as the culprit a common abstraction regarding the cost of collisions.  Our experimental results are complemented by analytical arguments that the number of collisions---and not solely makespan---is an important metric to optimize. We propose a new theoretical model that accounts for the cost of collisions, and derive new asymptotic bounds on the makespan for BEB and the newer backoff algorithms that align with our experimental findings. Finally, we argue that these findings have implications for the design of backoff algorithms in wireless networks.\smallskip
\keywords{Backoff algorithms \and contention resolution \and WiFi \and collisions} 
\end{abstract}

\section{Introduction}\label{sec:intro}\vspace{-5pt}

\defn{Randomized binary exponential backoff} (\defn{BEB}) plays an important role in coordinating access by multiple devices to a shared resource. Originally designed for use in old Ethernet systems decades ago~\cite{MetcalfeBo76}, BEB has since found application in a range of domains such as transactional memory~\cite{herelihy:transactional,scherer:advanced}, concurrent memory access~\cite{Ben-DavidB17,80120,mellor-crummey:algorithms}, and congestion control~\cite{mondal:removing}.  However, arguably, the most prominent application of BEB today is in IEEE 802.11 (WiFi) networks, where the shared resource is a wireless communication channel. 

Given its importance, BEB has been studied at length and is known to yield good throughput under well-behaved traffic~\cite{GoldbergMa96a,GoldbergMaPaSr00,HastadLeRo87,RaghavanUp95,Al-Ammal2000,Al-Ammal2001,Goodman:1988:SBE:44483.44488,bianchi:performance,song:stability}.  In contrast, when traffic is bursty, BEB is suspected to perform sub-optimally.  This is true, even for \defn{batched arrivals}, where  {\boldmath{$n$}} \defn{packets}  simultaneously arrive in a time-slotted system, begin contending for the channel, and succeed in being transmitted before the next batch arrives. In this setting, Bender~et al.~\cite{BenderFaHe05} prove that BEB has  $\Theta(n\log n)$ \defn{makespan}, which is the number of time slots until all packets are successfully transmitted.  In other words, the throughput is $O(1/\log n)$ and thus tends asymptotically to zero.

In light of this shortcoming, there has been significant interest in developing algorithms with improved makespan. Under batched arrivals, Bender et al.~\cite{BenderFaHe05} derive upper and lower bounds for several variants of backoff that are asymptotically superior to BEB in this regard. Furthermore,  again under batched arrivals, sawtooth backoff~\cite{GreenbergFlLa87} and truncated sawtooth backoff~\cite{bender:contention} algorithms achieve the asymptotically optimal makespan of $O(n)$.

This growing body of results provokes an obvious question: {\it How do newer algorithms compare to BEB in practice?} Here, we make progress towards an answer by restricting ourselves to bursty wireless traffic. In particular, we examine the case of a single burst (\defn{batch}) of packets. This is a prominent case in the theoretical literature where BEB is anticipated to do poorly, and it should be possible to identify which of the following situations is true: (1) A newer contention-resolution algorithm outperforms BEB, or (2) BEB outperforms newer contention-resolution algorithms.

Interestingly, neither of these outcomes is very palatable. In one form or another, BEB has operated in networks for over four decades and it remains an essential ingredient in several wireless standards.  Bursty traffic can arise in practice~\cite{Teymori2005,yu:study} and its impact has been examined~\cite{Ghani:2010,sarkar:effect,canberk:self,bhandari:performance}. If (1) holds, then BEB is potentially in need of revision and the ramifications of this are hard to overstate.  

Conversely, if (2) holds, then theory is not translating into improved performance in a prominent application domain.  At best, this is a matter of asymptotics. At worst, this indicates a problem with the theoretical model upon which newer results  are based. In this latter case, it is important to understand what assumptions are faulty, so that the model may be revised. \vspace{-10pt}


\subsection{A Common Theoretical Model} \label{sec:common-model}

In a wireless setting, given $n$ stations, the problem of \defn{contention resolution} addresses the number of slots until any one of the  stations transmits alone. A natural consideration is the time until a subset of $k$ stations each transmits alone; this often falls under the same label, but is also sometimes referred to as {\boldmath{$k$}}\defn{-selection} in the literature~(for example, see~\cite{Anta2010}). 

Here, we focus on the case of $k=n$ and examine the performance of various backoff algorithms. Much of the algorithmic work shares a theoretical model.  Three common assumptions are:\smallskip

\begin{itemize}[leftmargin=3.5mm]
\item{\bf A0.} Each slot has length that can accommodate a packet. \medskip

\item {\bf A1.} If a single packet transmits in a slot, the packet \defn{succeeds}, but \defn{failure} occurs if two or more packets transmit simultaneously due to a \defn{collision}.  \medskip

\item {\bf A2.} A collision incurs a delay of a single slot in which the failure occurred.
\end{itemize}

Assumption A0 is near-universal, but technically inaccurate for reasons discussed in Section~\ref{sec:802.11}. To summarize, under BEB, each station selects a random slot from a set of $2^r$ consecutive slots, for some integer $r\geq 0$; this is equivalent to setting a counter randomly to a value in
 $\{0, 1, ..., 2^{r}-1\}$, which is decremented at each slot, and transmission occurs when the counter reaches $0$. Thus, from an algorithmic perspective, stations decrement their respective counters as they march through these slots uninterrupted. However, in practice, transmission of the full packet may occur past the slot, while all other stations pause their execution (not decrementing their respective counters) until the transmission ends. Although desynchronization can occur, this assumption seems sufficiently close to reality that we should not expect performance to deviate greatly as a result.

Assumption A1 is prevalent in the literature (see~\cite{Anta2010,komlos:asymptotically,Capetanakis:2006,bender:heterogeneous,bender:contention,bender:how,fineman:contention2}), although variations exist. A  compelling alternative is the signal-to-noise-plus-interference (SINR) model~\cite{avin:sinr,moscibroda:worst},  which is less strict about failure in the event of simultaneous transmissions. Another model that has received attention is the affectance model~\cite{Hall2009}. Nevertheless, these all share the assumption that simultaneous transmissions may lower the probability of success.

Assumption A2 is also widely adopted (see the same examples for A1) and implicitly addresses two quantities that affect performance: the time to transmit a packet, and the time to receive any channel feedback on success or failure. Assigning a negligible delay to these quantities admits a simplified model, but ignores the associated performance impact. For example, the functionality for obtaining channel feedback is provided by a medium access control (MAC) protocol, of which a backoff algorithm is only one component.\vspace{-10pt}


\subsection{Our Main Message}

Our main message is that A2 is flawed in the WiFi setting. In particular, the cost of a collision is more significant than acknowledged by the theoretical model. This is not a matter of minor adjustments to the assumption, or an artifact of hidden constants in the algorithms examined. Rather, in WiFi networks, the way in which collisions are detected requires a revision to the problem of contention resolution if we aim to design algorithms for many wireless settings.

Several corollaries follow from our main claim, which we illustrate using detailed simulations. Additionally, we provide analytical arguments that support our experimental findings (see Section~\ref{sec:theory}).  Our belief is that these findings generalize to other  wireless settings---not only WiFi networks---and that contention-resolution algorithms that ignore the cost of collisions will likely not perform as advertised in these cases  (see Section~\ref{sec:interpret}). 

Finally, we emphasize that our findings are aimed at complementing prior theoretical results whose importance to many application domains is evident (see Section~\ref{sec:related}). However, arguably, a major application of backoff algorithms {\it is} in wireless networks, and this is what motivates our investigation.\vspace{-10pt}


\section{Overview of BEB in IEEE 802.11}\label{sec:802.11}

To understand our findings, it is helpful to summarize how many of the IEEE 802.11 standards work with  BEB. However, outside of this section and the description of our experimental setup,  discussion of such aspects and terminology is kept to a minimum. 

Throughout, we will often use interchangeably the terms {\it packet} and {\it station} depending on the context; the two uses are equivalent given that in the batched setting, each station seeks to transmit a single packet. Also, to be explicit, while we use the general term {\it packet}, we are specifically referring to a {\it frame}.

Exponential backoff~\cite{MetcalfeBo76} is a widely deployed algorithm for distributed multiple access. Informally, this backoff algorithm operates over a \defn{contention window (CW)} wherein each \defn{station} makes a single randomly-timed access attempt. In the event of two or more simultaneous attempts, the result is a collision and none of the stations succeed.  Backoff seeks to avoid collisions by dynamically increasing the contention-window size such that stations succeed.

IEEE 802.11 handles contention resolution via the \defn{distributed coordination function (DCF)} which employs BEB; as the name suggests, successive CWs double in size under BEB. The operation of DCF is summarized as follows. Prior to transmitting data, a station first senses the channel for a period of time known as a \defn{distributed inter-frame space (DIFS)}. If the channel is not in use over the DIFS, the station transmits its data; otherwise, it waits until the current transmission finishes and then initiates BEB.  

For a  contention window of size $w$, a timer value is selected uniformly at random from  $[0, w-1]$. So long as the channel is sensed to be idle, the timer counts down per slot and, when it expires, the station transmits. However, if at any prior time  the channel is sensed busy, BEB is {\it paused for the duration of the current transmission}, and then resumed (not restarted) after another DIFS.

After a station transmits, it awaits an {\bf acknowledgement (ACK)} from the receiver. If the transmission was successful, then the receiver waits for a short amount of time known as a \defn{short inter-frame space (SIFS)}---of shorter duration than a DIFS---before sending the ACK. Upon receiving an ACK, the station learns that its transmission was successful.  Otherwise, the station waits for an {\defn{ACK-timeout}} duration before concluding that a collision occurred.  This series of actions is referred to as \defn{collision detection}; the cost of which lies at the heart of our argument.  If a collision is detected, then the station must attempt a retransmission via the same process with its CW doubled in size.

Note that both the transmission of data and the acknowledgement process occur ``outside'' of the backoff component of DCF; Figure~1 highlights this. 
In contrast, the focus of many algorithmic results is solely on the slots of this backoff component. 

Finally,  RTS/CTS (request-to-send/clear-to-send) is an optional mechanism. Informally, a station will send an RTS message and await a CTS message from the receiver prior to transmitting its data. Due to increased overhead, there is some debate on whether RTS/CTS is worthwhile and, if so, when it should be enabled~\cite{chatzimisios2004effectiveness,xu2003effectiveness,chatzimisios2004optimisation}; typically, it is not. Here, we focus on the case where RTS/CTS is disabled, although our experiments show that our findings continue to hold when this mechanism is used (see Section~\ref{sec:hidden}). \vspace{-10pt}


\begin{figure}[t]
\vspace{-5pt}
\captionsetup[subfigure]{labelformat=empty}
\centering
\begin{subfigure}{1.0\textwidth} 
\hspace{0pt}\includegraphics[width=0.5\textwidth]{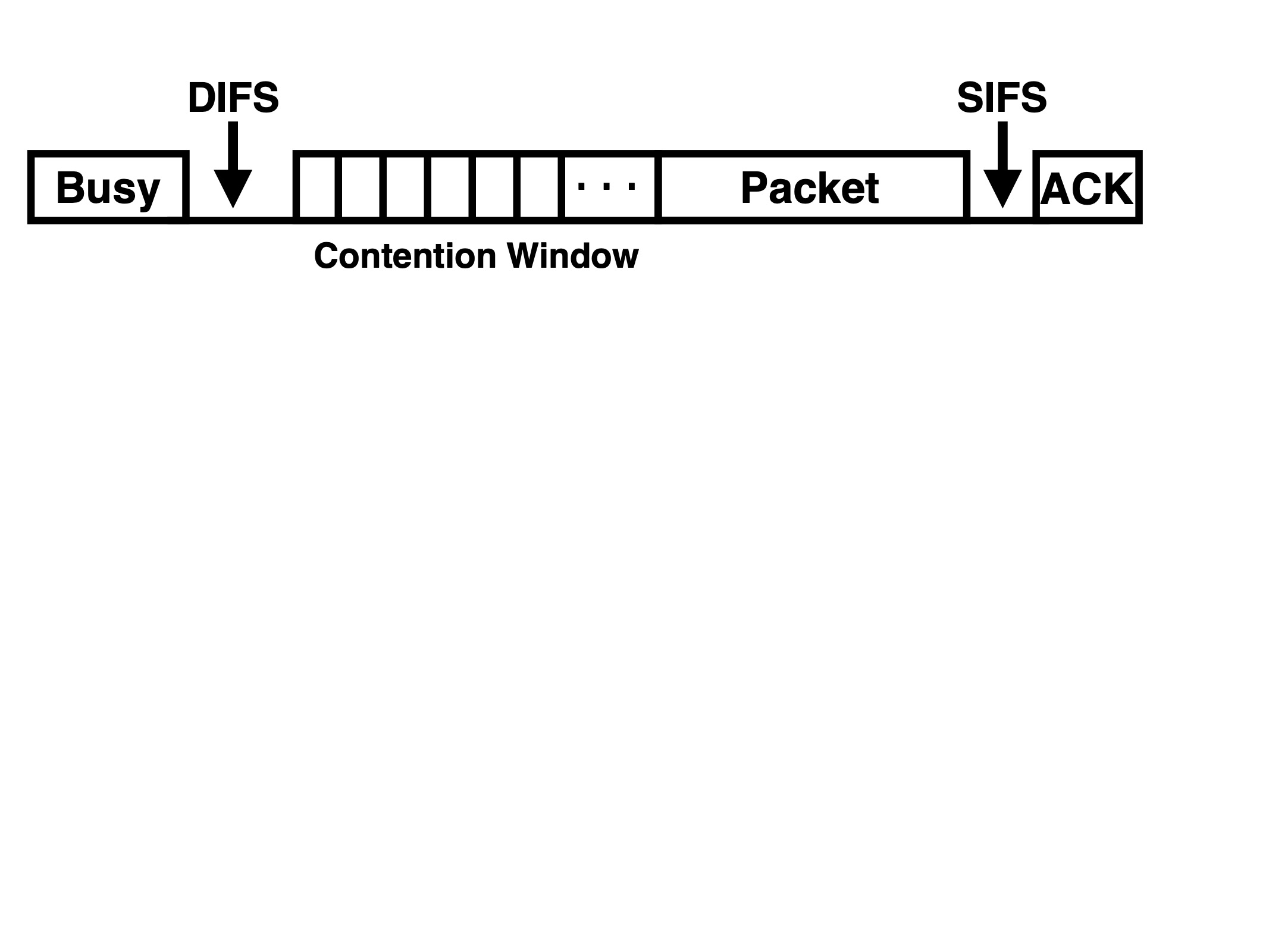} 
\vspace{-0.2cm}
\end{subfigure}
\vspace{-4.5cm}
\caption{An overview of DCF.} \vspace{-10pt}
\end{figure}


\section{Experimental Setup}\label{sec:experimental}

In this section, we describe and discuss our simulation tools, our choices about which protocols to simulate,  the network layout used in our experiments, and the method by which we exclude outliers from our data.\medskip

\noindent{\bf Simulation Tools.} We employ Network Simulator 3 (NS3, version 3.25)~\cite{NS3} which is a widely used network simulation tool in the research community~\cite{weingartner:performance}. Our simulation code, scripts, and data, and a description of these files are available online.\footnote{This content can be accessed by visiting \texttt{www.maxwellyoung.net/publications}, or by using Github at \texttt{https://github.com/trishac97/NS3-802.11g-Backoff}.} In order to motivate our design choices and for the purposes of reproducibility, our experimental setup is described in this section. 

Our reasons for using NS3 are twofold. First, wireless communication is difficult to model and employing NS3 helps allay concerns that our findings are an artifact of poorly-modeled wireless effects.  Second, given the assumptions upon which contention-resolution algorithms are based, NS3 can reveal whether we are being led astray by an assumption that appears reasonable, but results in an important discrepancy between theory and practice.

Table~\ref{table:parameters} provides the experimental parameters we use, unless we note otherwise. Path-loss models with default parameters are known to be faithful~\cite{stoffers:comparing} and, therefore, our experiments employ the  log-distance propagation loss model in NS3. For transmission and reception of packets, we use the YANS~\cite{lacage:yans} module. 

NS3 is valuable for the extraordinary level of detail it provides, and we will refer to this simulation tool as the {\defn{\detailedSimulator}}. However, this tool cannot scale to a large number of stations/packets. In cases where we wish to verify asymptotic results, we resort to a simpler simulation tool that implements the backoff algorithms using a notion of slots and collisions that aligns with the theoretical model (recall Section~\ref{sec:common-model}). We have created this in Java, and we refer to this simulation tool as the \defn{\theoreticalSimulator}.

 
\begin{table}[t] 
\begin{center}
{
\begin{tabular}{ |p{4.1cm}|p{3cm}|  }
\hline
\rowcolor{LightCyan}\hspace{37pt}{\bf Parameter} &  \hspace{30pt}{\bf Value}  \\
\hline
Slot duration & 9$\mu$s  \\
\hline
SIFS  &  16$\mu$s   \\
\hline
DIFS  &   34$\mu$s \\
\hline
ACK timeout &  75$\mu$s\\
\hline
Transport layer protocol & UDP\\
\hline
Packet overhead & 64 bytes\\
\hline
Contention-window size min. & 4\\
\hline
Contention-window size max.& 4096  \\
\hline
RTS/CTS & Off\tablefootnote{While our results focus on the more common case of RTS/CTS being disabled, we do briefly report on the impact of RTS/CTS in Section~\ref{sec:collisions-discussion}.} \\
\hline
\end{tabular}
}\caption{Parameter values used in our experiments, unless explicitly noted otherwise.}\label{table:parameters}
\end{center}
\end{table}


\medskip

\noindent{\bf Protocol Choices.} At the MAC layer, we make use of IEEE 802.11g in our NS3 experiments. IEEE 802.11g provides a data rate in the tens of megabits per second (Mbit/s)\footnote{The standard offers a theoretical maximum of 54 Mbit/s.}, operates in the 2.4 GHz band, and remains in use today. Although IEEE 802.11ac and IEEE 802.11n are more recent members of the WiFi family, we feel that IEEE 802.11g is sufficiently representative to establish our main thesis.

We implement changes to the behavior of the contention window based on the algorithms we investigate. All experiments use IPv4 and UDP. Our investigation employs UDP instead of TCP to reduce the impact of potential transport-layer effects that may complicate the interpretation of our results. Ultimately, given the explanation for our findings, we believe that this choice does not alter our final conclusions.

The amount of overhead for each packet is 64-bytes: 8 bytes for UDP, 20 bytes for IPv4, 8 bytes for an LLC/SNAP header, and 28 bytes of additional overhead at the MAC layer. Fragmentation is disabled in our experiments.

For our experiments, unless noted otherwise, we set our slot time to be  $9 \mu s$  and our a SIFS to be $16\mu s$.  The duration of an acknowledgement (ACK) timeout is specified by the  IEEE 802.11 standard\footnote{This timeout period is specified in  Section 10.3.2.9, page 1317 of~\cite{802.11-standard}.}  to be roughly the sum of a SIFS ($16\mu$s), standard slot time ($9\mu$s), and preamble ($20\mu$s); a total of $45\mu$s.  However, in practice, this is subject to tuning.  In our experiments, this ACK-timeout value gave poor performance; there is insufficient time for the ACK before the sender decides to retransmit. We set the default value of $75\mu$s in NS3 since this yielded good performance and is still the same order of magnitude as suggested in the standard.\medskip

\noindent{\bf Network Layout.} In our experiments, unless noted otherwise, $n$ stations are placed in a  $40$ meter $\times$ $40$ meter grid, and they are laid out at regular spacing, starting at the south-west corner of the grid moving left to right by $2$-meter (m) increments, and then up when the current row is filled.\footnote{We have also run experiments with $1$ and $3$ meter increments and the qualitative behavior remains the same; therefore, we omit those results.}  A \defn{wireless access point (AP)} is located (roughly) at the center of the grid. 


We do not simulate additional terrain or environmental phenomena. We aim for  a topology that (1) allows us to focus on the performance at the MAC-layer, and (2) approximates some situations in the real world. 

Regarding (1), the density of devices is uniform, so there is no subset of devices that will be subjected to a disproportionate amount of traffic and the   interference that may result. The central placement of the AP minimizes the maximum distance to all stations, which reduces signal attenuation.  Deviating from either of these two properties might degrade performance due to a positioning that exacerbates effects at the physical layer, and it could be challenging to separate such effects from the performance at the MAC-layer.

In the context of (2), we are more tentative. The range of WiFi under $2.4$GHz is often considered to be roughly $50$m  for indoor settings and $100$ meters for outdoor settings. This guides our choice of spacing; that is, no station is more than $40$m from our centrally-placed AP. For an indoor setting, there is obviously variation in how users will be located, but a grid might be viewed as a crude approximation to some scenarios, such as a large classroom setting. We note that the issue of AP placement is a research problem by itself (for example, see~\cite{zhong:methods,6503906,maksuriwong:moga}), with  specialized software tools~\cite{ekahau,aerohive} pertaining to just this aspect. However, a centrally placed AP might be viewed as an approximation to what is done in, for example, a lecture hall or a cafe.
\smallskip

\noindent{\bf Outliers.} Throughout, the following common approach is used to identify outliers in our data. Let $\Delta$ be the distance between the first and third quartiles, $Q_1$ and $Q_3$, respectively. Any data point smaller than $Q_1 - 1.5 \Delta$ or larger than $Q_3 + 1.5 \Delta$ is declared an outlier and discarded. We note that we observe very few outliers. 




\begin{figure}[t!]
\begin{tcolorbox}[standard jigsaw, opacityback=0]

\noindent{}\hspace{-3pt}{\bf Monotonic Windowed Backoff Algorithm}\smallskip

\noindent\hspace{-3pt}$w \leftarrow$ initial window size

\noindent\hspace{-3pt}$f(w)\leftarrow$ window-scaling function\medskip

\noindent{}\hspace{-3pt}A station with a packet to transmit does the following until successful:\vspace{-3pt}	 
	
          \begin{itemize}[leftmargin=3mm]
			
                 \item Attempt to transmit in a slot chosen uniformly at random from $w$.\smallskip
			
                  \item If the transmission failed, then wait until the end of the window and set $w \leftarrow  \lceil(1+ f(w))w\rceil$.
       
          \end{itemize}
\end{tcolorbox}
\caption{A generic backoff algorithm with monotically-increasing windows. In particular, the initial window size is a small constant, and $f(w)=1/\lg\lg w, 1/\lg w$, and $1$ for LLB,~LB, and \B, respectively.}\label{fig:pseusocode1}
 \end{figure}


\section{A Single Batch}\label{sec:single-batch}

We examine a single batch of $n$ packets that simultaneously begin their contention for the channel. As competitors with \B, we experiment with the following backoff algorithms: \LB ~(LB), \LLB (LLB)~\cite{BenderFaHe05} and \STB (STB) \cite{Gereb-GrausT92,GreenbergL85}. We also investigate the performance of \textsc{Truncated Sawtooth-Backoff} (TSTB)~\cite{bender:contention}, although we defer this to Section~\ref{sec:tstb}.

Both LLB and LB are closely related to \B~in that they execute using a CW that increases in size monotonically. Here, each window increases by a $(1+f(w))$-factor in size over the previous window, where $f(w)$ is a function that controls the amount by which the window size grows. For  LLB, LB, and BEB, $f(w)=1/\lg\lg w, 1/\lg w,$ and $1$, respectively, and the initial window sizes are $4$ when we execute our experiments. The pseudocode is presented in Figure~\ref{fig:pseusocode1}. 

In contrast, STB is non-monotonic and executes over a doubly-nested loop. The outer loop sets the current window size $w$ to be double that used in the preceding outer loop; this is like BEB. Additionally, for each such window, the inner loop executes over $O(\lg w)$ windows of decreasing size: $w, w/2, w/4, ..., 4$. In each window, a slot is chosen uniformly at random for packet transmission; this is the ``backon'' component of STB. 

Finally, TSTB executes identically to STB except that, for some constant $c>0$, the inner loop  executes over $\lg(c\lg w)$ windows of decreasing size: $w, w/2, w/4, ...,$  $\max\{\lfloor w/c\lg w \rfloor, 4\}$.


\medskip

\noindent{\bf Measuring Time.}  For experiments, the amount of time until all packets succeed is reported in microseconds ($\mu$s), and we refer to this as the \defn{\totaltime}.

The IEEE 802.11 standards also use a notion of a slot. For example, the \defn{contention window (CW)} used in the backoff component  consists of slots. The duration of a slot depends on the IEEE 802.11 standard employed, and we typically use a slot duration of $9 \mu s$.  We emphasize that other operations happen outside of this slotted time (recall Section~\ref{sec:802.11}), such as packet  transmission; all of this is captured by the \totaltime.\footnote{We exclude the time required for a station to associate with the access point and ARP requests/replies, as this will be the same regardless of the algorithm being evaluated.}

\begin{table}[t!] 
\begin{center}
{
\begin{tabular}{ |p{3.4cm}|c|  }
\hline
\rowcolor{LightCyan} \hspace{37pt}{\bf Algorithm} &  {\bf Contention-Window Slots}  \\
\hline
\B & $\Theta(n\log n)$ \\
\hline
LB & $\Theta\left( \frac{n\log n}{\log\log n} \right)$  \\
\hline
LLB & $\Theta\left( \frac{n\log\log n}{\log\log\log n} \right)$  \\
\hline
STB, TSTB & $\Theta\left(n\right)$  \\
\hline
\end{tabular}
}\vspace{0pt}\caption{Known w.h.p. theoretical guarantees on CW slots for a batch of $n$ packets under BEB, LB, LLB~\cite{BenderFaHe05} and STB~\cite{Gereb-GrausT92,GreenbergL85}, and TSTB~\cite{BenderKPY18,bender:contention}.}\label{table:makespan}
\end{center}
\end{table}

For theoretical results, we measure in slots, but here a slot is an abstract unit of time. Results in the theory literature focus on the number of slots required to complete $n$ packets. However, the slots in question belong to CWs only (again, recall Section~\ref{sec:802.11}).  For example, under BEB, $n$ packets complete with high probability in $\Theta(n\log n)$ slots belonging to CWs (see~\cite{BenderFaHe05}). When evaluating backoff algorithms with a simulator, we report experimental results for this quantity using the terminology \defn{contention-window slots (CW slots)}.

\smallskip


\begin{figure*}[t]
\captionsetup[subfigure]{labelformat=empty}
\centering
\begin{subfigure}{1.0\textwidth} 
\includegraphics[width=1.0\textwidth]{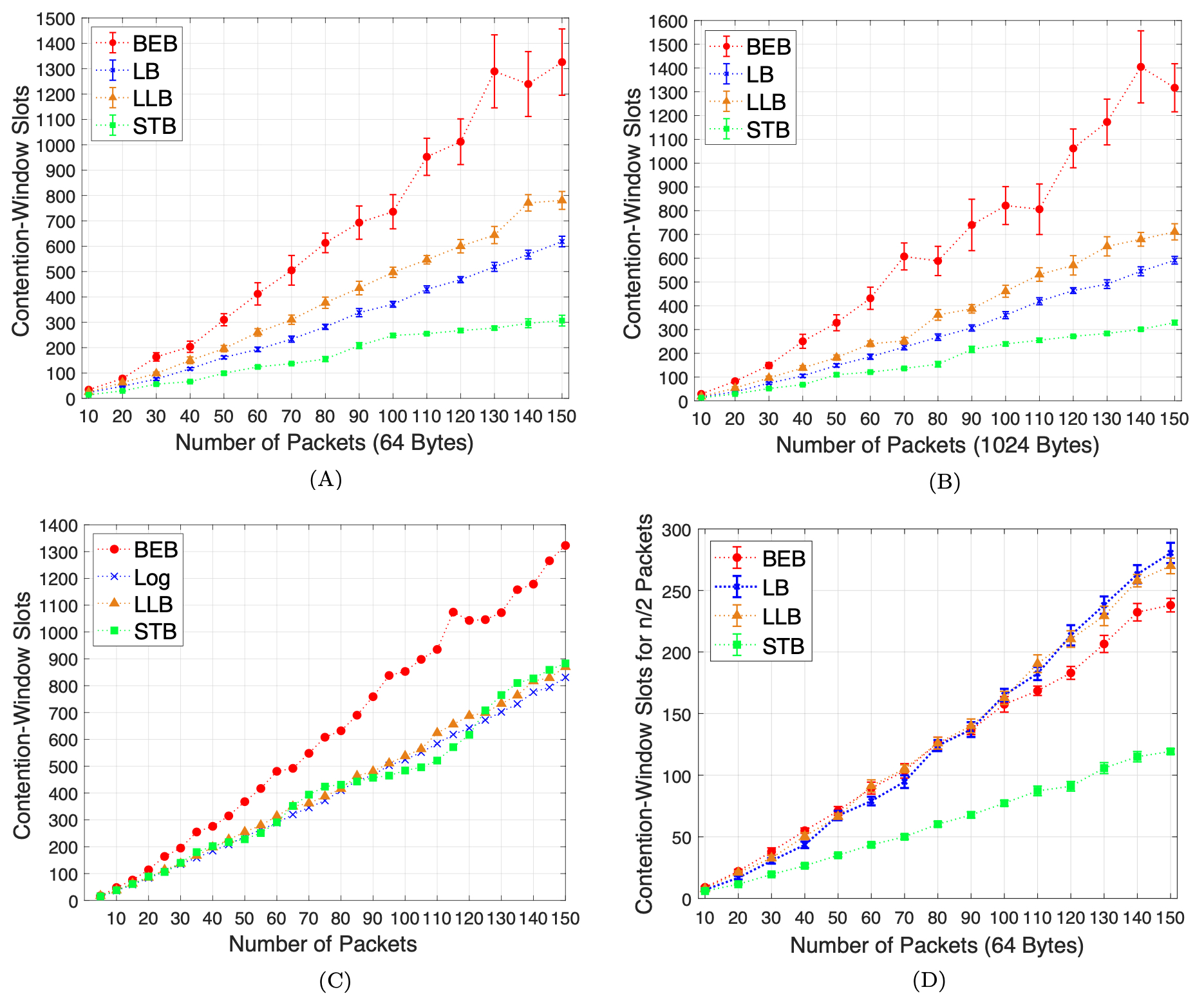} 
\end{subfigure}
\caption{Median values are reported: (A) and (B) CW slots from the \detailedSimulator with $30$ trials for each value of $n$ with $64$B and $1024$B payloads, (C) CW slots via the \theoreticalSimulator with $50$ trials for each value of $n$, (D) number of CW slots required to finish $n/2$ packets from the \detailedSimulator with $20$ trials for each value of $n$. Bars represent $95\%$ confidence intervals (omitted for (C) since they are negligible).} \label{fig:makespan}
\vspace{-12pt}
\end{figure*}


Table~\ref{table:makespan} summarizes the known with-high-probability (w.h.p.)\footnote{With probability $1-O(\frac{1}{n^{c}})$ for a tunable constant $c>1$.} guarantees on CW slots. Note that the $\Theta$-notation implies an upper and lower bound on the number of CW slots until all $n$ packets succeed.

For CW slots, LB, LLB, STB, and TSTB have superior guarantees over BEB. In particular, both STB and TSTB achieve $\Theta(n)$  CW slots, which is asymptotically optimal. Despite its similarity with STB, TSTB is worth evaluating given that its smaller number of windows may offer superior performance in practice, although we defer its examination until Section~\ref{sec:tstb}.

\subsection{Theory and Experiment}\label{sec:theory-and-experiment} \vspace{-7pt}

We begin by comparing the number of CW slots. The algorithms we investigate are designed to reduce this quantity since all slots in the theoretical model occur within some contention window. Under this metric,  LLB, LB, and STB are expected to outperform BEB. 

Throughout, when we report on performance,  we are referring to a median value of trials run with $n=150$. Percentage increases or decreases are calculated by the standard formula: $100\times(A-B)/B$, where $B$ is always the value for BEB (the ``old'' algorithm) and $A$ corresponds to a value for one of LLB, LB, or STB (the ``new'' algorithms).\vspace{-5pt}


\begin{figure*}[t!]
\captionsetup[subfigure]{labelformat=empty}
\centering
\begin{subfigure}{1.0\textwidth} 
\includegraphics[width=1.0\textwidth]{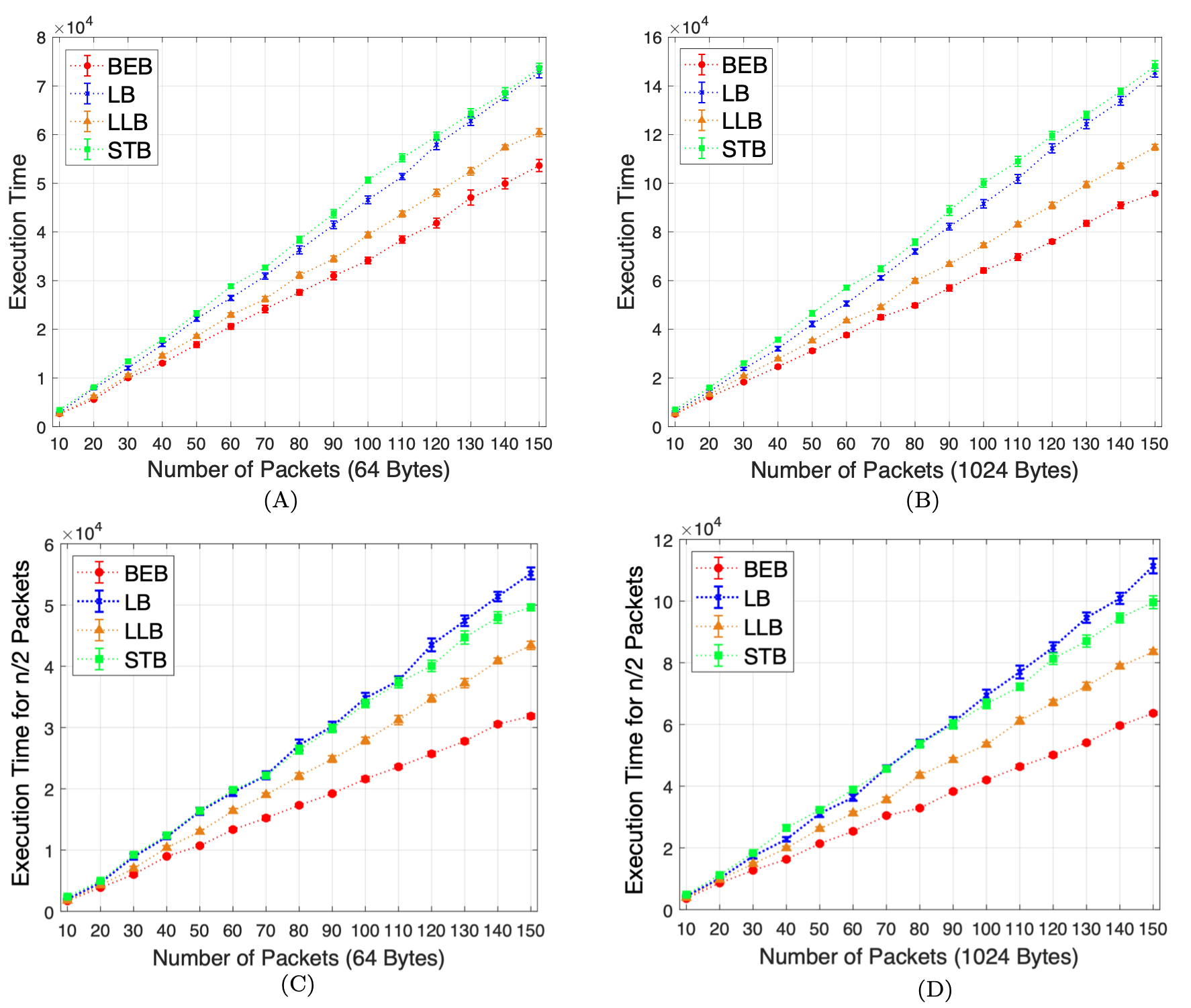} 
\end{subfigure}
\caption{The \detailedSimulator results with the median reported from $30$ trials for each value of $n$ and time measured in $\mu s$: (A) and (B) give the \totaltime for $64$B and $1024$B payloads,  (C) and (D) give the  \totaltime required to complete $n/2$ packets with a $64$B payload and $1024$B payloads.}\label{fig:total-time}  
\end{figure*}


\begin{figure*}[t]\vspace{-1cm}
\captionsetup[subfigure]{labelformat=empty}
\centering
\begin{subfigure}{1.0\textwidth} 
\includegraphics[width=1.0\textwidth]{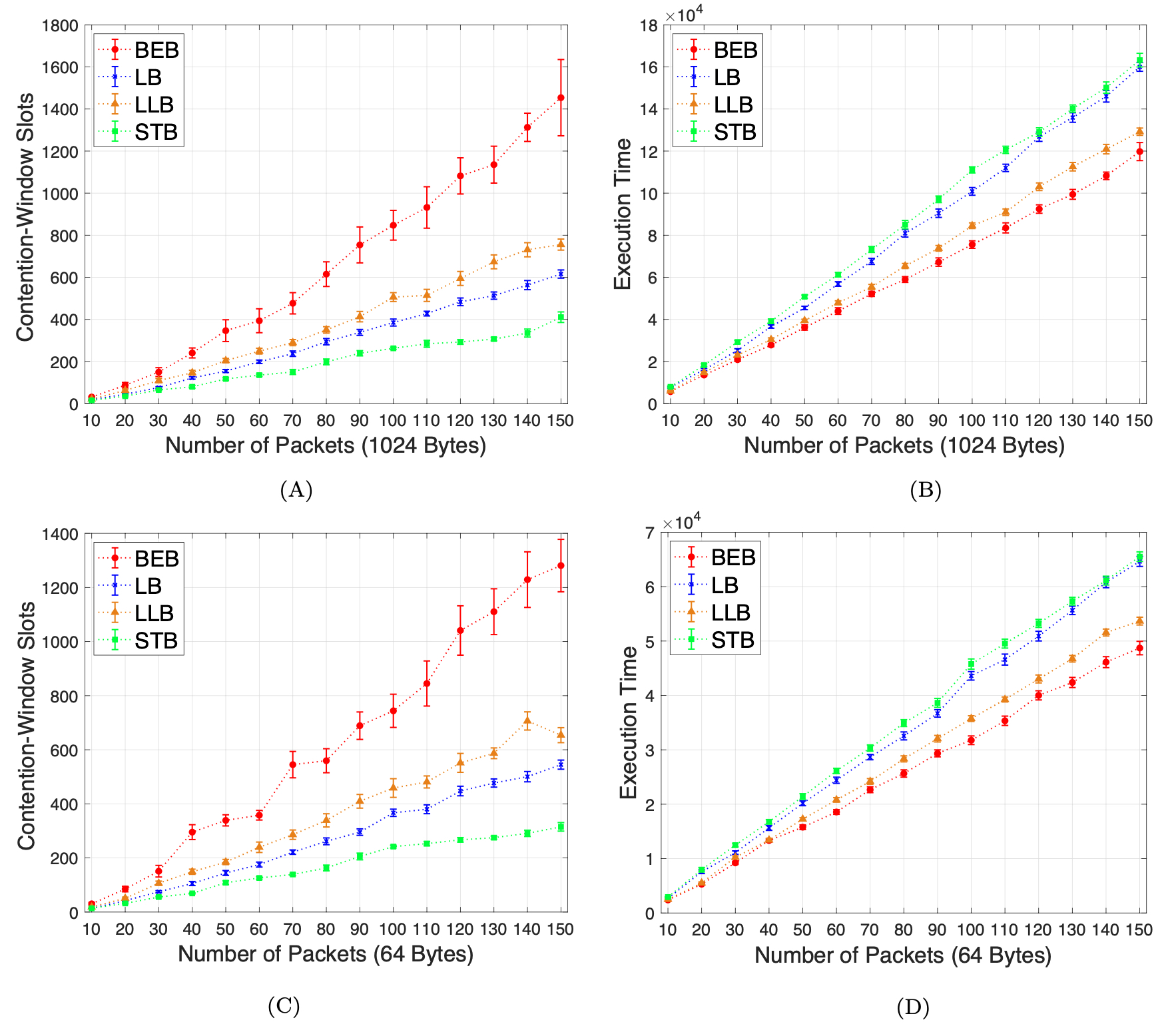} 
\end{subfigure}
\caption{Plots from the \detailedSimulator. Plots (A) and (B) give the CW slots and \totaltime, respectively,  when the slot duration is $20 \mu s$ and the SIFS duration is $10 \mu s$. Plots (C) and (D) depict CW slots and \totaltime under a random positioning of the stations, along with the AP located in the  north-east corner of the plane.} \label{fig:random-placement}\vspace{-5pt}
\end{figure*}


\subsubsection{Contention-Window Slots}\label{sec:CW-slots-exp}\vspace{-5pt}

We provide results from the \detailedSimulator using relatively small packets, with a $64$-byte (B)  payload, and larger packets, with a $1024$B payload.  

Figures~\ref{fig:makespan}(A) and \ref{fig:makespan}(B) illustrate our experimental findings with respect to CW slots. The behavior generally agrees with theoretical predictions that each of LLB, LB, and STB should outperform BEB. 

Interestingly, LLB incurs more CW slots than LB despite the former's better asymptotic guarantees. We suspect this is an artifact of hidden constants/scaling, and evidence of this is presented later in Section~\ref{sec:as-total}. 

Nevertheless,  in agreement with theory, LLB, LB, and STB demonstrate improvements over BEB, giving a respective decrease of $40.2$\%, $52.6$\%, and $76.5$\%, respectively, with a $64$B payload. This  behavior repeats with a $1024$B payload, where LLB, LB, STB demonstrate a respective decrease of $45.7$\%, $54.8$\%,  $75.1$\%.

For comparison, Figure~\ref{fig:makespan}(C) depicts CW slots derived from our \theoreticalSimulator. Reassuringly,  this roughly agrees with our results from the \detailedSimulator in terms of magnitude of values and the separation of BEB from the other algorithms; albeit, the performances of LLB, LB, and STB do not separate as cleanly in this data.

Despite BEB's inferior performance, is it possible that many packets finish faster relative to the newer algorithms, and only a few packets account for the remaining makespan? Such behavior might suggest that BEB is performing well for the majority of packets. We examine this in Figure~\ref{fig:makespan}(D) using a $64$B payload, where we  present the number of CW slots required for half of the packets to succeed.  A few observations can be made. First, BEB does not exhibit such behavior. Second, for all algorithms, the remaining $n/2$ packets are responsible for the bulk of the CW slots . Third, the plot illustrates a qualitative difference between the monotonic backoff algorithms (BEB, LLB, LB) and the {\it backon} behavior of STB.\smallskip

\begin{tcolorbox}[standard jigsaw, opacityback=0]
\noindent{\bf Result 1.~}{\it Experiments confirm theoretical predictions that LLB, LB, and STB outperform BEB with respect to CW slots.}
\end{tcolorbox}

\begin{table*}[t] \vspace{10pt}

\begin{center}

{

\begin{tabular}{  |M{1.8cm}|M{4cm}|M{4cm}|M{4cm}|  }

\hline

\rowcolor{LightCyan} {\bf Algorithm} & {\bf Sections~\ref{sec:CW-slots-exp} and~\ref{sec:totaltime}} &  \hspace{2pt} {\bf Slot duration \boldmath{$= 20 \mu s$} \newline\mbox{\hspace{6pt}}SIFS$=10\mu s$} &   {\bf Random Placement} \\

\hline

 \rowcolor{light-gray} & CW Slots~~~~~\Totaltime & CW Slots~~~~~\Totaltime & CW Slots~~~~~\Totaltime\\

\hline

{\bf LLB} & \hspace{-6pt}+40.2\%~~~~~~~~~~--12.9\% &   \hspace{-6pt}+44.4\%~~~~~~~~~~--12.8\% & \hspace{-6pt}+48.9\%~~~~~~~~~~--10.2\% \\  

{\bf LB} &  \hspace{-6pt}+52.6\%~~~~~~~~~~--36.1\% &  \hspace{-6pt}+54.8\%~~~~~~~~~~--39.2\% & \hspace{-6pt}+57.5\%~~~~~~~~~~--32.7\%\\

{\bf STB} & \hspace{-6pt}+76.5\%~~~~~~~~~~--36.9\% & \hspace{-6pt}+75.4\%~~~~~~~~~~--40.5\%   & \hspace{-6pt}+75.4\%~~~~~~~~~~--34.5\%\\

\hline

\end{tabular}

}\caption{Performance of BEB versus LB, LLB, and STB for some different parameter settings. A `+' indicates the algorithm in the first column is outperforming BEB by a percentage given by the formula in Section~\ref{sec:theory-and-experiment}, while a `--' indicates the corresponding algorithm is underperforming BEB.}\label{table:comparison}\vspace{-15pt}

\end{center}

\end{table*}


\subsubsection{\Totaltime}\label{sec:totaltime}\vspace{-5pt}

It is tempting to consider the single-batch scenario settled. However, if we focus on the \totaltime for both the $64$B and $1024$B payload sizes, then a  different picture emerges. 

The degree to which the newer algorithms outperform~\B~is erased, as depicted in Figures~\ref{fig:total-time}(A) and \ref{fig:total-time}(B). In fact, the order of performance is reversed with \totaltime ordered from least to greatest as BEB,  LLB, LB,  STB. Specifically, for $64$B payloads, LLB, LB, and STB suffer an increase of $12.9\%$, $36.1\%$, and $36.9\%$, respectively, over BEB. For $1024$B payloads, the increase is $19.6\%$, $51.6\%$, and $54.7\%$, respectively. 

Notably, the larger packet size seems to favor BEB. We revisit this observation in Section~\ref{sec:theory} when discussing a new theoretical model, and the  impact of a larger packet size.

What about the time until $n/2$ packets are successfully transmitted? Perhaps newer algorithms do better for the bulk of packets, but suffer from a few stragglers? Interestingly, Figures~\ref{fig:total-time}(C) and \ref{fig:total-time}(D) suggest that this is not the case. Indeed, for a $64$B payload, BEB performs even better over LLB, LB, and STB with the latter exhibiting an increase of $36.3\%$, $73.2\%$, $55.8\%$, respectively. Similarly, for $1024$B, the percent increase is $31.2\%$, $75.1\%$,  $56.6\%$, respectively.\medskip



\begin{tcolorbox}[standard jigsaw, opacityback=0]
\noindent{\bf Result 2.~}{\it In comparison to BEB, the \totaltime for  LLB, LB, and STB is significantly worse.}
\end{tcolorbox}
\vspace{-0pt}

These findings are troubling since, arguably, \totaltime is a more-important performance metric in practice than just CW slots. Critically, we note that this behavior is detected only through the use of the \detailedSimulator; it is not apparent from the \theoreticalSimulator.  \medskip

\noindent{\bf Robustness of Results.} How sensitive are these results to our setup? Already, we have illustrated that the discrepancy between CW slots and \totaltime holds whether we use small packets (a $64$-byte payload) or larger packets (a $1024$-byte payload). 

What if the slot time or SIFS (and, consequently, DIFS) are changed? We also executed the algorithms with a larger slot duration of $20 \mu s$  and a smaller SIFS of $10 \mu s$. These are the parameters used by 802.11g when there are devices present in the network running older WiFi standards. However, these new settings also have the benefit of reversing our default setup, which has a {\it smaller} slot duration ($9\mu s$) and a {\it larger} SIFS ($16 \mu s$). 

The resulting data is plotted in Figures~\ref{fig:random-placement}(A) and (B). The performance comparison for $n=150$ with a payload size of $1024$ bytes is provided in Table~\ref{table:comparison}, which provides a quick comparison of these results to those presented earlier in Sections~\ref{sec:CW-slots-exp} and~\ref{sec:totaltime}. The results suggest that the impact from these changes is not significant.

Perhaps the network topology has an impact? To investigate this, we ran our experiments with a different layout.  A $100m$  $\times$ $100m$ meter grid is used. The $x$-coordinate and $y$-coordinate of each station were chosen uniformly at random from $\{1, ..., 100\}$. We performed experiments with the AP in the center of the grid, and also with the AP at the north-east position of the plane (i.e., coordinate $(100,100)$) for an even more asymmetric layout. The results from this second setup are plotted in Figures~\ref{fig:random-placement}(C) and (D), and the data is reported in Table~\ref{table:comparison} for $n=150$ and payload size $64$ bytes, demonstrating that our qualitative results persist despite altering the layout.

\section{The Cost of Collisions}\label{sec:hidden}\vspace{-5pt}

We now investigate the cause of Result 2. The number of ACK timeouts per station provides an important hint. As Figure~\ref{fig:ack-timeouts} shows, relative to BEB, the newer algorithms are incurring substantially more ACK timeouts. This evidence points to collisions as the main culprit, since each ACK timeout may be considered to arise from a collision (we discuss this in Section~\ref{sec:collisions-discussion}). 

In particular, the way in which collision detection is performed means that each collision is costly in terms of time.  Recalling the overview in Section~\ref{sec:802.11},  in addition to executing over CW slots, time is incurred by transmitting the full packet and then waiting for an acknowledgement  until a timeout period. In support of our claim, we perform some back-of-the-envelope (BOTE) accounting of the delay caused by collisions. We use BEB with $n=150$ and a payload size of $64$ bytes as an example throughout our discussion below.

\vspace{-10pt}
\subsection{Back-of-the-Envelope (BOTE) Calculations}\label{sec:bote}

When a collision occurs, the corresponding station waits for an ACK timeout and then retransmits its packet; here, we try to estimate the resulting delay.  Recall  from Section~\ref{sec:experimental} that the duration of an ACK timeout is $75\mu$s.  To estimate the time required to transmit the data, we account for the \defn{transmission delay}, which is how long it takes to place all bits of the packet onto the channel. We assume that a packet of size $128$B ($64$B payload plus $64$B overhead), and conservatively, the transmission delay is roughly $\frac{1024 \mbox{\tiny~bits}}{54 \mbox{\tiny~Mbits/s} } \approx 19 \mu$s.\footnote{In practice,  $54$ Mbits/s is not achieved and so the true transmission delay is larger. Therefore, we are being conservative; the time for a (re)transmission is likely higher.} To this, we can add  the associated $20\mu$s preamble. We ignore propagation delay, since this is negligible for our setting given the relatively small distances involved.   

We break our BOTE calculation into two pieces, addressing collisions: (a) in windows less than size $n$, and (b) in windows of size $n$ and larger. For (a), we note that many slots in these windows should have a collision, and theory suggests that there are few successes until we are no smaller than, say, a constant factor of $n$. Interestingly,  Figure~\ref{fig:total-time}(C) indicates that some packets {\it are} finishing in portion (a), and we note that collision avoidance mechanisms, via carrier sensing and network allocation vectors (NAV)~\cite{kurose:computer}, may explain this.  If we conservatively assume $n/2 = 75$ slots with collisions for part (a), this translates into a duration of  $75 \cdot (19\mu s + 20\mu s + 75\mu s) = 8,550 \mu s$.  

For (b), we reason in the following way.  Over the entire execution, the maximum number of ACK timeouts for BEB---and, thus, the number of collisions---experienced by an unlucky station is $9$ according to Figure~5, and the median value is $7$. Many collisions will involve several stations, and these occur when the windows are small, which we address in portion (a). A smaller number of these collisions should occur in later windows included in portion (b) given that windows are larger, which reduces the chance of a collision.  We focus on (i) the last $n/2$ packets and (ii) collisions involving two packets. Given the median of $7$ collisions experienced per packet, consider the implications if, say, only $2$ of these collisions meet our restriction of (ii). Then, there are $75\cdot{}2$ slots where where two stations claim the same CW slot and send their packet (in full) while other stations are paused.  These collisions result in an aggregate  duration of roughly $150 \cdot (19\mu s + 20\mu s + 75\mu s) = 17,100\mu s$. 


\begin{figure}[t!]
\captionsetup[subfigure]{labelformat=empty}
\centering
\begin{subfigure}{0.65\textwidth} 
\hspace{-0pt}\includegraphics[width=0.75\textwidth, trim = 3in 6.5in 3in 7.3in, clip, scale=0.5]{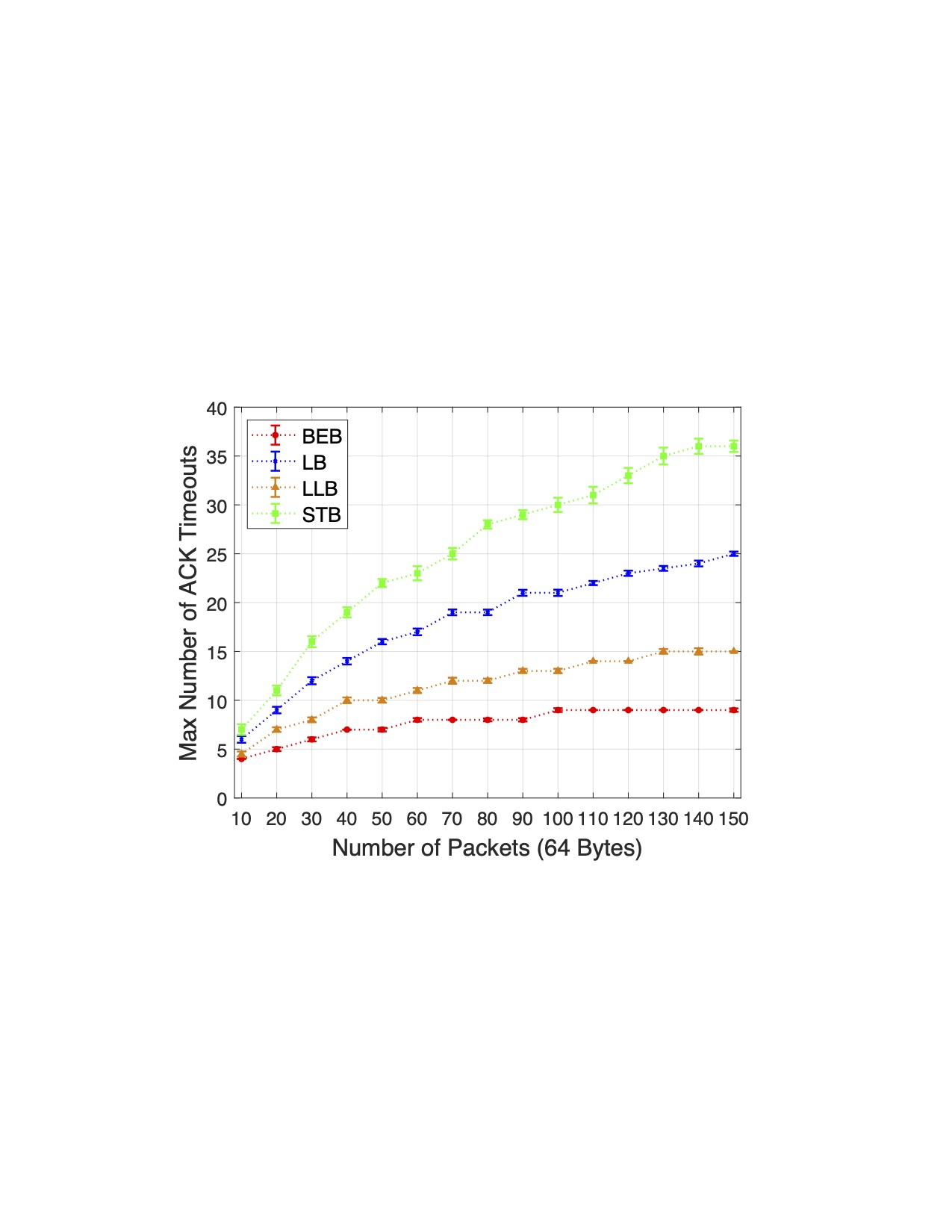}
\end{subfigure}
\vspace{-10pt}\caption{The maximum number of ACK timeouts per station over all stations with a $64$B payload from our experiments with the \detailedSimulator.  The median from $30$ trials is reported for each value of $n$.}\label{fig:ack-timeouts} 
\end{figure}


Adding the estimates from (a) and (b) together, the time due to collisions is $25,650 \mu s$. To compare this with the time consumed by CW slots, note that in Figure~\ref{fig:makespan}(A), BEB incurs $1326$ CW slots for $n=150$. Each CW slot has duration $9\mu$s, which yields a total of $9(1326) = 11,934\mu s$.  This suggests that collisions have a larger cost than CW slots.

Finally, we can incorporate the time required for the (final) successful transmission for each of the $150$ stations. We measure the time between when a station sends its packet until the time the AP sends back its ACK, which is $34 \mu s$. Given the small size of an ACK ($14$ bytes) and the small distance between any station and the AP, we treat the transmission delay and propagation delay associated with the ACK as negligible. Therefore the time for successful transmissions is of $150\cdot (19\mu s + 20\mu s  + 34 \mu s ) = 10,950 \mu s$. 

Note that summing these values yields $25,650 + 11,934 + 10,950 = 48,534  \mu s$. This seems reasonable if we compare against the measured median \totaltime value for BEB of $53,800\mu s$ in Figure~\ref{fig:total-time}(A).  In other words, these BOTE calculations provide the right order of magnitude.\smallskip
 
\subsection{Importance of Transmission Delay} The above BOTE analysis has implications for packet size or, more specifically, for time required to place all bits of the packet onto the communication channel; that is, the transmission delay.  For the $1024$B payload plus the $64$B of overhead, the transmission delay grows to $161\mu s$. 

A similar BOTE analysis for BEB yields roughly $89,000$ $\mu s$ for transmissions. By comparison, using Figure~\ref{fig:makespan}(B), the median number of CW slots remains essentially unchanged at roughly $1,300 \mu s$, contributing
 $9(1,300)\mu s$ $=$ $11,700$ $\mu s$. Therefore, transmission delay greatly dominates in this case, and the degree to which it dominates grows with the packet size.
 
 
 \medskip


\subsection{A Sanity Check} Do the simulation results bear out these BOTE calculations, or is this a ``just-so'' story? We answer this question by examining the simulation data from another different angle. We use data from with a slot duration of $20\mu s$ and the SIFS duration is $10\mu s$, and for $n=150$, we consider the total interval of time for a single trial; that is, until all $n$ packets succeed. When one or more stations is sending its packet, we count this as a {\defn{at-least-one (ALO) instance}}. We  look at the number of ALO instances over the trial and then take the median over all trials.  

This data is plotted in Figure~\ref{fig:med-total-t}. We expect  the number of ALO instances to be lowest for BEB, followed by LLB, LB, and STB in increasing order, aligning with Figure~\ref{fig:ack-timeouts}, and indeed this is what we see.  BEB has a median of $300$ ALO instances. Informally, we may consider this to be composed of $150$ final successful transmissions and $150$ retransmissions that resulted from collisions.  This implies a \totaltime of $150(19\mu s + 20\mu s + 34\mu s) + 150(19\mu s + 20\mu s + 75\mu s) =28,050 \mu s$.  

By comparison, in our BOTE calculations for BEB above, we estimated $75$ collisions (in part (a)) plus  $150$ collisions (in part (b)) for a total of $225$ collisions involving two or more stations, plus the $150$ successful transmissions. This yielded $36,600\mu s$, which is larger, but is of the same order of magnitude. 

Finally, using BEB's median of $300$ ALO instances, we can add the time of $11,934\mu s$ due to  $1326$ CW slots to arrive at a total cost of $28,050 +11,934 = 39,984\mu s$ which is lower than the BOTE calculation above of  $48, 534$, but this calculations again aligns reasonably well, and with the data for BEB with $n=150$ in Figure~\ref{fig:total-time}(A).

\begin{figure}[t]
\captionsetup[subfigure]{labelformat=empty}
\centering
\begin{subfigure}{1.0\textwidth} 
\hspace{5pt}\includegraphics[width=0.45\textwidth, trim = 0.5in 0in 1.2in 0.4in, clip, scale=0.6]{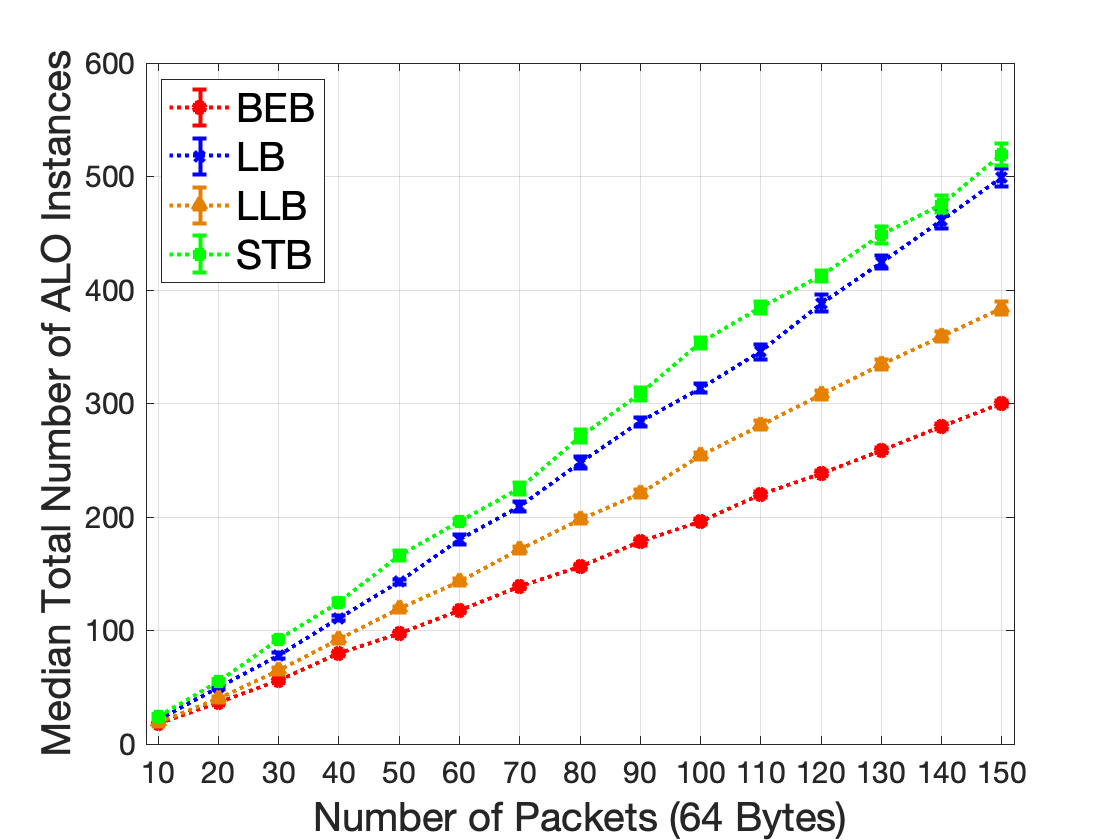} 
\end{subfigure}
\caption{ALO instances for BEB, LB, LLB, and STB. The median of $30$ trials is reported for each value of $n$ using the \detailedSimulator. }\label{fig:med-total-t} 
\end{figure}

These experimental results provide us with evidence that assumption A2 is not accurate with regards to the cost of collisions.\smallskip

\begin{tcolorbox}[standard jigsaw, opacityback=0]
\noindent{\bf Result 3.~} {\it The amount of time incurred by collisions is not properly accounted for by A2, and this cost exceeds the time incurred by CW slots.}
\end{tcolorbox}
\vspace{-15pt}

\subsection{Additional Discussion of Our Experiments}\label{sec:collisions-discussion}\vspace{-5pt}

In this section, we discuss other aspects related to our experiments. We elaborate on the correspondence between ACK timeouts and collisions in our experiments, since this plays a role in the BOTE calculations of Section~\ref{sec:bote}; in particular, we infer a collision from an ACK timeout. Additionally, we discuss why the \totaltime does not scale linearly in the number of ACK timeouts, as one might expect at first glance. Finally, we touch on whether our findings apply to situations where RTS/CTS is enabled; again, this is not often the case in practice, and this is not the focus of this work, but we present our thoughts on this issue.\medskip

\begin{figure*}[t!] 
\captionsetup[subfigure]{labelformat=empty}
\centering
\begin{subfigure}{1.0\textwidth} 
\includegraphics[width=1.0\textwidth]{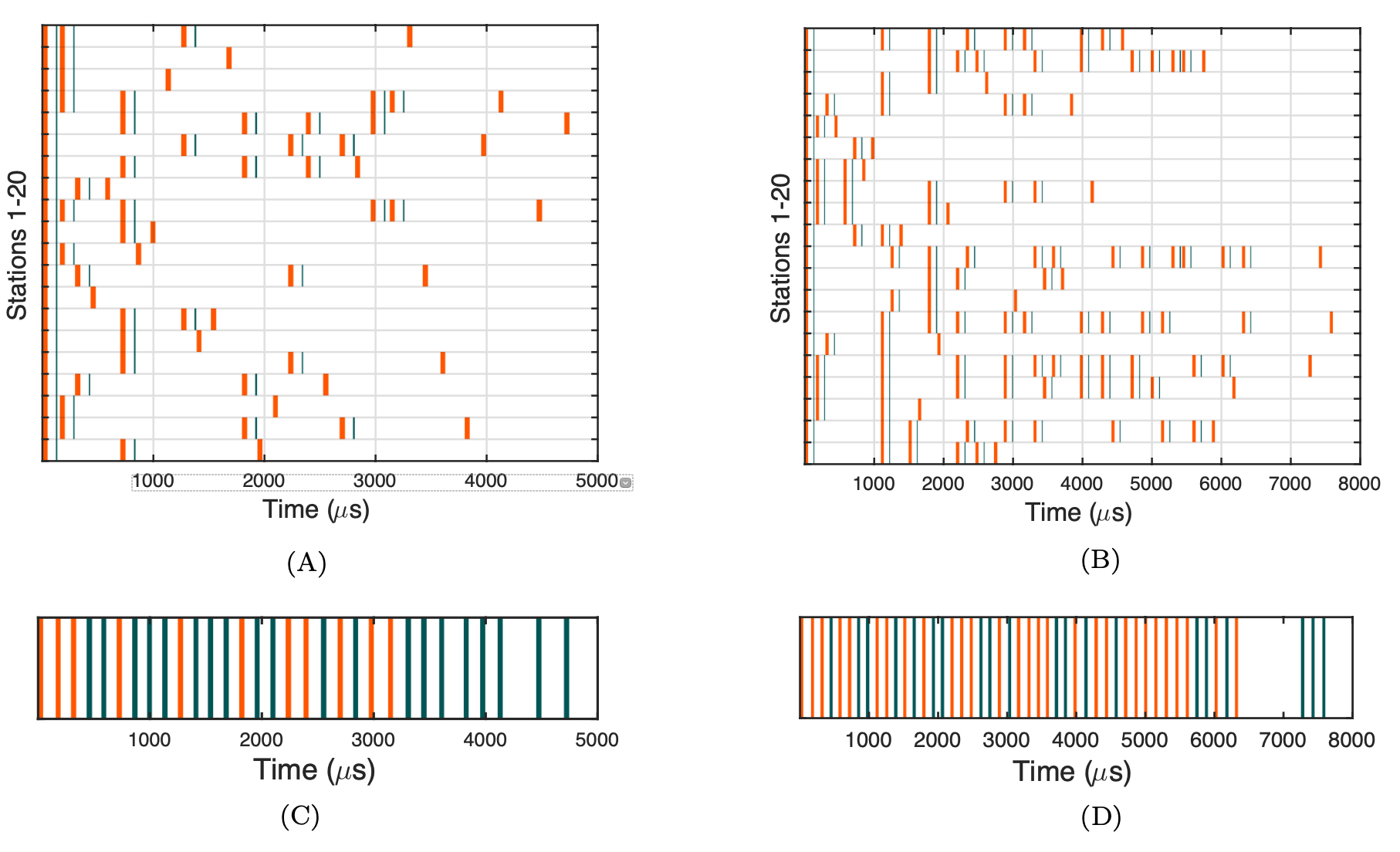} 
\end{subfigure}
\caption{Execution of BEB (A) and STB (B) with 20 stations for $64$B payload and distance $2m$. The thick orange lines correspond to the time spent transmitting the packet, and the thin blue lines indicate an ACK timeout; note that there is no such timeout in the case where the transmission is successful. For BEB and STB, respectively, plots (C) and (D) depict in red the time spent on transmitting the packet and waiting for the ACK timeout, and the green illustrates the time where a successful transmission is performed. For BEB, roughly $26\%$ of the execution time is incurred by collisions and collision detection, while for STB this number is roughly  $43\%$.
} \label{fig:beb-stb-execution}
\vspace{-0pt}
\end{figure*}

\noindent{\bf ACK Timeout \boldmath{$\approx$} Collision.} It is true that not all ACK timeouts necessarily imply that the corresponding packet suffered a collision. For example, an ACK might be lost due to signal attenuation, even if the packet was transmitted without any collision. Note that, in such a case, the sending station still diagnoses a failure, and so the same costs apply. 

However, for our simple setup, virtually all ACK failures result from a collision. This is evident from Figures~\ref{fig:beb-stb-execution}(A) and~\ref{fig:beb-stb-execution}(B) which illustrate a trial with $n=20$ under BEB and STB, using a $64$B payload. Collisions occur only when two or more stations transmit (illustratd by a thick orange line) at the same time and the result is an ACK timeout event (indicated by a thin blue line); in all other cases, the transmission is successful and the corresponding ACK is received.

Another interesting feature of these plots is that they again illustrate the difference in number of collisions experienced by BEB versus STB. Figures~\ref{fig:beb-stb-execution}(C) and~\ref{fig:beb-stb-execution}(D) condense the executions in Figures~\ref{fig:beb-stb-execution}(A) and \ref{fig:beb-stb-execution}(B), respectively,  by depicting the time over which there are no transmissions, collisions occur, or there is a successful transmission. The difference in the number of collisions is apparent. If we conservatively assume that the cost of a collision is the time between the start of the packet transmission and the time of an ACK timeout, then for BEB, this is roughly $26\%$ of the execution time, while for STB this number is $43\%$.


\medskip


\noindent{\bf \Totaltime and Number of Collisions.} We observe that the \totaltime does not grow linearly with the maximum number of ACK timeouts experienced by a station. Under LLB, an unlucky station suffers roughly $1.5\times$ the number of ACK timeouts compared to BEB, but the \totaltime of LLB is not $1.5\times$ that of BEB. 

Why? Consider $n$ stations where each collision involves only two stations. Then, there are $n/2$ slots with a collision, and each is added to the \totaltime.  In contrast, consider the opposite extreme where all $n$ stations transmit in the same slot and collide. Then, there is a single collision which adds only a single transmission delay to the \totaltime.

The number of stations involved in a single collision is larger for algorithms whose CWs grow more slowly than under BEB, such as LB and LLB.  Here, the windows are smaller for a longer period of time, so a collision typically involves many packets. For STB, a similar phenomenon is at work; the backon component yields collisions involving many stations. That is, LB, LLB, and STB are closer to the second case above. In contrast,  BEB is closer to the first as it grows its windows the fastest and does not have a backon component. Therefore, LB can have roughly $1.5\times$ as many ACK timeouts as BEB, but the \totaltime for LB is not $1.5\times$ that of BEB.\medskip

\noindent{\bf RTS/CTS.} Although it is not examined in detail in our work, we remark on the use of  RTS/CTS. When enabled, stations can experience collisions among the RTS frames (rather than among packets, for the most part). These are smaller in size, but the remainder of the \totaltime calculation remains the same, and additional time is incurred due to additional inter-frame spaces and the transmission of CTS frames. 


Ultimately, we witness the same qualitative behavior when RTS/CTS is enabled. For example, without RTS/CTS, recall from Section~\ref{sec:totaltime} that the \totaltime for LLB (BEB's closest competitor) increases by $12.9$\% and $19.6$\% for the $64$B and $1024$B, respectively, over BEB. With RTS/CTS, the respective increases are $14.5$\% and $9.5$\%.

Interestingly, the increase in \totaltime is somewhat less for the $1024$B packets. Why might this be? We believe the following qualitative behavior may explain this.  Let us make the simplifying assumption that the only collisions which occur involve an RTS frame; this is likely mostly true in our experiments. Note that RTS frames are fairly small at $20$B. 

Consider the following two cases. In Case 1, we have small data packets. Consequently, the impact of RTS collisions is significant to the \totaltime because resending an RTS takes time on the same order as the transmission delay for the data frame. Since LLB has more RTS collisions, and RTS collisions are significant, LLB will do much worse than BEB in this case.

Conversely, in Case 2,  we have large data packets. Here,  the impact of RTS is negligible compared to the time to send the data frame (which is never involved in a collision). While LLB has more RTS collisions, each one incurs little time compared to sending the data frame (which never collides). Thus, in contrast to Case 1, the degree to which LLB underperforms BEB will be less. This aligns with the notion that RTS/CTS can be helpful when packets are large; however, this does not make the other algorithms competitive with BEB.


\subsubsection{Backing Off Slowly is Bad}

The reason for the discrepancy between theory and experiment is seemingly apparent. LLB increases each successive contention window by a smaller amount than BEB; in other words, LLB is {\it backing off more slowly}, and the same is true of LB. Informally, this slower-backoff behavior is the reason behind the superior number of CW slots for LB and LLB since they linger in CWs where the contention is ``just right'' for a significant fraction of the packets to succeed. However, backing off slowly also inflicts a greater number of collisions.

Note that BEB backs off faster, jumping away from such favorable contention windows and thus incurring many empty slots. This is undesirable from the perspective of optimizing the number of CW slots. However, the result is fewer collisions. Given the empirical results, this appears to be a favorable tradeoff. We explicitly note that LLB backs off faster than LB. In this way, LLB is closer to BEB and, therefore, is not outperformed as badly in terms of \totaltime as illustrated in Figures~4(A) and 4(B). 

We highlight this observation below: \medskip

\begin{tcolorbox}[standard jigsaw, opacityback=0]
\noindent{\bf Result 4.~} {\it For algorithm design, optimizing CW slots at the expense of increased collisions is a poor design choice.}
\end{tcolorbox}


\section{A New Theoretical Model}\label{sec:theory}

We now argue for a new theoretical model to account for our experimental findings.  As before, time proceeds in abstract slots, and we denote a slot duration by {\boldmath{$s$}}.  We preserve A0 and A1 from our discussion in of Section~\ref{sec:common-model}, while modifying A2:

\begin{itemize}[leftmargin=3.5mm]
\item{\bf A0.} Each slot has length that can accommodate a packet. \medskip

\item {\bf A1.} If a single packet transmits in a slot, the packet \defn{succeeds}, but \defn{failure} occurs if two or more packets transmit simultaneously due to a \defn{collision}.  \medskip

\item {\bf A2$^*$\hspace{-3pt}.} A collision incurs a delay of $D\geq 1$ slots. 
\end{itemize}

In A2$^*$,  $D$ is a  parameter that captures the cost for a collision measured in the number of abstract slots. We argue that the \totaltime for a backoff algorithm $A$ is the sum of (i) the time spent executing backoff in contention windows, plus (ii) the time incurred by collisions: 
$$ \mathcal{W}_A\cdot{s} + \mathcal{C}_A\cdot D \cdot s  $$
\noindent where  $\mathcal{W}_A$ is the number of CW slots, and $\mathcal{C}_A$ is the number of slots with a collision. We can represent this more naturally as a number of slots, denoted by {\boldmath{$\mathcal{T}_A$}}:
$$\mathcal{T}_A =   \mathcal{W}_A  + \mathcal{C}_A\cdot D  $$

Our revised theoretical model remains simple, but we will later argue (Section~\ref{sec:as-total}) that it explains the behavior observed in our experiments with the \detailedSimulator.  

In this section, we analyze BEB, LB, LLB, and STB using our model to derive new theoretical bounds on $\mathcal{T}_A$. For this analysis, we note that prior  results in the literature (recall Table~\ref{table:makespan}) already establish $\mathcal{W}_A$. Therefore, we are interested in $\mathcal{C}_A$, for which we derive asymptotically tight bounds. To be clear, we will prove the following:\medskip\smallskip
\begin{itemize}
\item[{$\bullet$}]{\bf BEB (Section~\ref{sec:beb-analysis}).} An upper and lower bound on $\mathcal{C}_{\mbox{\tiny BEB}}$. It has been shown that $\mathcal{W}_{\mbox{\tiny BEB}}= O(n\log n)$~\cite{BenderFaHe05}, and we will show that $\mathcal{C}_{\mbox{\tiny BEB}} = \Theta(n)$. \medskip\smallskip

\item[{$\bullet$}]{\bf LLB and LB (Sections~\ref{sec:llb} and~\ref{sec:lb}).} A lower bound for both algorithms. Specifically, we will prove that $\mathcal{C}_{\mbox{\tiny LLB}} $ $=$ $\Omega( \frac{n\log\log n}{\log\log\log n})$. Since $\mathcal{W}_{\mbox{\tiny LLB}}\hspace{-3pt}=\hspace{-2pt}O( \frac{n\log\log n}{\log\log\log n})$ \cite{BenderFaHe05}, this will imply that $\mathcal{C}_{\mbox{\tiny LLB}} = \Theta( \frac{n\log\log n}{\log\log\log n})$. Similarly, for LB, we will show  $\mathcal{C}_{\mbox{\tiny LB}} = \Omega(\frac{n\log n}{\log\log n})$, which is asymptotically tight for the same reason.\medskip\smallskip

\item[{$\bullet$}]{\bf STB (Section~\ref{sec:stb-analysis})}. We will show that $\mathcal{C}_{\mbox{\tiny STB}} = \Omega(n)$. Since $\mathcal{W}_{\mbox{\tiny STB}}= O(n)$~\cite{Gereb-GrausT92,GreenbergL85}, this will imply that $\mathcal{C}_{\mbox{\tiny STB}} = \Theta(n)$.\medskip

\end{itemize}

To derive these bounds, we follow in the footsteps of Bender et al.~\cite{BenderFaHe05}, who previously provided bounds on $\mathcal{W}_A$. The new asymptotic bounds on  $\mathcal{C}_A$ match $\mathcal{W}_A$, except for the case of BEB. However, these new results still require  careful analysis, as they do not immediately follow from $\mathcal{W}_A$, as BEB exemplifies.

Finally, in Section~\ref{sec:as-total}, we will combine the old bounds on  $\mathcal{W}_A$ with our  bounds on  $\mathcal{C}_A$ to arrive at bounds on $\mathcal{T}_A$. We discuss the implications of $\mathcal{T}_A$ and argue, given experimental results, that A2$^*$ should be used instead of A2. 


\subsection{Analyzing $\mathcal{C}_A$}\label{sec:analyzing-collisions}

In order to provide additional support for our empirical findings, we derive asymptotic bounds on $\mathcal{C}_A$. Our arguments are couched in terms of packets and slots, but what follows is a balls-into-bins analysis, where balls correspond to packets, and slots correspond to bins. To bound $\mathcal{C}_A$, we are interested in the number of bins---where bins correspond to the slots in a CW---that contain two or more balls, since this is the equivalent of a collision. Throughout our analysis, we use the terminology of slots and bins interchangeably. Finally, all of our results hold given that $n$ is sufficiently large.

We make use of the following well-known inequalities.\footnote{For example, these inequalities are established in Lemma 3.3 by Richa et al.~\cite{richa:jamming4}.} 

\begin{fact}\label{fact:taylor-lower}
For any $0 \leq x<1$,   $e^{-x/(1-x)} \leq 1 - x$.
\end{fact}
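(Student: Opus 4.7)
The plan is to reduce this to the well-known exponential inequality $1+y \leq e^y$ (valid for all real $y$) via a substitution, which makes the argument essentially one line. Specifically, I would set $y = x/(1-x)$, so that the hypothesis $0 \leq x < 1$ corresponds to $y \geq 0$, and note that $1-x = 1/(1+y)$. Under this substitution, the target inequality $e^{-x/(1-x)} \leq 1-x$ becomes $e^{-y} \leq 1/(1+y)$, equivalently $1+y \leq e^{y}$, which is a standard fact (provable in turn by $g(y) = e^y - 1 - y$ having $g(0)=0$ and $g'(y)=e^y-1 \geq 0$ for $y \geq 0$).

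If one prefers to avoid the substitution and work directly, I would instead take logarithms (legitimate because $1-x > 0$ on the domain) and show that $h(x) := \ln(1-x) + x/(1-x) \geq 0$ for $x \in [0,1)$. A direct differentiation gives
\[
h'(x) = -\frac{1}{1-x} + \frac{1}{(1-x)^2} = \frac{x}{(1-x)^2},
\]
which is nonnegative on $[0,1)$. Combined with $h(0) = 0$, this forces $h(x) \geq 0$ throughout, which is precisely the desired inequality after exponentiating.

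I would present the substitution proof as the main argument since it is shortest and makes transparent the connection to the ubiquitous bound $1+y \leq e^y$. There is no real obstacle here; the only thing to double-check is the edge behavior, namely that at $x=0$ both sides equal $1$ (so the inequality holds with equality), and that as $x \to 1^-$ the left side tends to $0$ while the right side also tends to $0$, but from the correct side (the substitution argument handles this uniformly since $y \to \infty$ and $1+y \leq e^y$ remains valid). Thus the fact follows in a few lines and is ready to be invoked in the subsequent balls-into-bins analysis of $\mathcal{C}_A$.
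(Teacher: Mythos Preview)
Your proof is correct. The paper does not actually prove this fact at all; it merely states it as a well-known inequality and defers to a citation (Lemma~3.3 of Richa et al.), so your self-contained argument via the substitution $y = x/(1-x)$ reducing to $1+y \le e^y$ (or the equivalent direct differentiation of $h(x) = \ln(1-x) + x/(1-x)$) goes beyond what the paper provides.
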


\begin{fact}\label{fact:taylor-upper}
For any $x$,   $1 - x \leq e^{-x}$.
\end{fact}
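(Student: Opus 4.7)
The plan is to establish $1 - x \leq e^{-x}$ for every real $x$ by a short convexity argument. I would define $g(x) = e^{-x} - (1 - x) = e^{-x} + x - 1$ and reduce the claim to showing that $g(x) \geq 0$ on all of $\mathbb{R}$.

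The key step is to differentiate: $g'(x) = 1 - e^{-x}$ and $g''(x) = e^{-x}$. Since $g''(x) > 0$ everywhere, $g$ is strictly convex. Solving $g'(x) = 0$ yields the unique critical point $x = 0$, which must therefore be the global minimum. Because $g(0) = 1 + 0 - 1 = 0$, it follows that $g(x) \geq 0$ for every real $x$, which is precisely the desired inequality. If one prefers to avoid convexity machinery, the same conclusion follows from a direct sign analysis: $g'(x) < 0$ for $x < 0$ and $g'(x) > 0$ for $x > 0$, so $g$ is decreasing on $(-\infty, 0]$ and increasing on $[0, \infty)$, and $g(0) = 0$ is its minimum value.

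There is no real obstacle here; the statement is a textbook consequence of the convexity of $t \mapsto e^{-t}$. An equivalent geometric phrasing is that the line $y = 1 - x$ is the tangent to $y = e^{-x}$ at $x = 0$, and any convex function lies weakly above each of its tangent lines. The only point worth flagging is that, in contrast to the companion Fact~\ref{fact:taylor-lower}, the inequality requires no restriction on $x$, so the argument must be valid on all of $\mathbb{R}$ rather than on a subinterval; the derivative-based reasoning above handles this automatically.
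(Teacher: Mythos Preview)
Your proof is correct and complete. The paper does not actually supply a proof of this fact at all: it states both Fact~\ref{fact:taylor-lower} and Fact~\ref{fact:taylor-upper} as well-known inequalities and refers the reader to an external source (Lemma~3.3 of Richa et al.). So there is nothing to compare against; your convexity/derivative argument is the standard one and is more than adequate for what the paper needs.
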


The bounds that we derive in this section hold \defn{with high probability (w.h.p.)}; that is, with probability at least $1 - 1/n^c$ for some tunable constant $c>1$. Throughout our analysis, we make use of the following previously-known concentration result:

\begin{theorem}\label{thm:MOBD}
(Method of Bounded Differences, Corollary 5.2 in~\cite{dubhashi:concentration})  Let $f$ be a function of independent variables $X_1, ..., X_N$ such that for any $b,b'$ it holds that $|f(X_1, ..., X_i = b, ..., X_N ) - f(X_1, ..., X_i = b', ..., X_N)| \leq c_i$ for $i=1, ..., N$. Then, the following holds:\vspace{-5pt}
$$Pr( f > E[f] + t)  \leq  e^{-2t^2/(\sum_{i} c_i^2)}\vspace{-3pt}$$ 
$$Pr( f < E[f] - t) \leq e^{-2t^2/(\sum_{i} c_i^2)}.$$
\end{theorem}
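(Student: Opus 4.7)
The plan is to establish this as a consequence of the Azuma--Hoeffding martingale inequality applied to the Doob martingale of $f$. Although stated here as a black-box citation, the standard proof runs as follows and is the one I would reproduce.

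First, I would construct the Doob martingale $Y_0, Y_1, \ldots, Y_N$ associated with $f$, defined by $Y_i = E[f(X_1, \ldots, X_N) \mid X_1, \ldots, X_i]$, with the convention $Y_0 = E[f]$. By the tower property of conditional expectation, $(Y_i)$ is a martingale with respect to the filtration generated by $X_1, \ldots, X_i$, and trivially $Y_N = f(X_1, \ldots, X_N)$. Hence $f - E[f] = \sum_{i=1}^{N} (Y_i - Y_{i-1})$ is a sum of martingale differences, which reduces the theorem to controlling each increment and invoking a concentration bound.

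The main step—and the one I expect to be the principal obstacle—is verifying that $|Y_i - Y_{i-1}| \leq c_i$ almost surely. Here I would exploit independence of the $X_j$. Writing $g_i(x_1, \ldots, x_i) = E[f(x_1, \ldots, x_i, X_{i+1}, \ldots, X_N)]$, we have $Y_i = g_i(X_1, \ldots, X_i)$ and $Y_{i-1} = E[g_i(X_1, \ldots, X_{i-1}, X_i) \mid X_1, \ldots, X_{i-1}]$. Thus $Y_i - Y_{i-1}$ equals $g_i(X_1,\ldots,X_i) - E_{X_i'}[g_i(X_1, \ldots, X_{i-1}, X_i')]$, where $X_i'$ is an independent copy of $X_i$. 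Applying the hypothesis $|f(\ldots, X_i=b, \ldots) - f(\ldots, X_i=b', \ldots)| \leq c_i$ pointwise on the remaining coordinates and taking expectations gives $|g_i(X_1,\ldots,X_{i-1},b) - g_i(X_1,\ldots,X_{i-1},b')| \leq c_i$, which implies $|Y_i - Y_{i-1}| \leq c_i$. The subtle point is keeping track of which variables are being averaged versus fixed; this is where independence of the $X_j$ is essential.

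With the bounded-difference property in hand, I would finish by the standard Chernoff argument. For any $\lambda > 0$, Markov's inequality gives $\Pr(f - E[f] > t) \leq e^{-\lambda t}\, E[e^{\lambda(Y_N - Y_0)}]$. Iterating the tower property and applying Hoeffding's lemma to each conditional moment generating function (since $Y_i - Y_{i-1}$ is conditionally mean-zero and lies in an interval of length at most $2c_i$) yields $E[e^{\lambda(Y_N - Y_0)}] \leq \prod_{i=1}^{N} e^{\lambda^2 c_i^2 / 2}$. Optimizing $\lambda = t / \sum_i c_i^2$ produces the claimed upper-tail bound $e^{-2t^2/\sum_i c_i^2}$; the lower-tail bound follows symmetrically by applying the same argument to $-f$, whose bounded-difference constants are identical.
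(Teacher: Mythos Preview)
The paper does not prove this theorem; it is quoted verbatim as a black-box concentration inequality from Dubhashi and Panconesi and then applied. So there is no ``paper's proof'' to compare against---your proposal goes beyond what the paper does by actually sketching the standard McDiarmid argument via the Doob martingale and Azuma--Hoeffding.

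That said, your sketch has a quantitative gap that prevents it from yielding the stated exponent. You assert that $Y_i - Y_{i-1}$ lies in an interval of length at most $2c_i$, apply Hoeffding's lemma to get $E[e^{\lambda(Y_i-Y_{i-1})}\mid X_1,\ldots,X_{i-1}] \le e^{\lambda^2 c_i^2/2}$, and then claim that optimizing $\lambda$ gives $e^{-2t^2/\sum_i c_i^2}$. But with that moment-generating-function bound, the optimized exponent is only $-t^2/(2\sum_i c_i^2)$, off by a factor of $4$. The fix is that, \emph{conditionally on $X_1,\ldots,X_{i-1}$}, the function $x_i \mapsto g_i(X_1,\ldots,X_{i-1},x_i)$ has oscillation at most $c_i$ (not $2c_i$), so $Y_i - Y_{i-1}$ lies in a (random) interval of length $c_i$. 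Hoeffding's lemma then gives $e^{\lambda^2 c_i^2/8}$ per step, and optimizing at $\lambda = 4t/\sum_i c_i^2$ recovers the claimed bound $e^{-2t^2/\sum_i c_i^2}$. Your argument is otherwise the correct and standard one.
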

\vspace{-3pt}
\noindent{}Note that this concentration result holds for random variables that may be dependent.  The following result follows directly, and we factor it out here since we use it in several of our arguments:

\begin{corollary}\label{cor:whp}
Let $Z$ be the number of collisions in a window of size $w$. Then:
$$Pr\left( Z > E[Z] + t\right)  \leq  e^{-2t^2/w}\vspace{-3pt}$$
$$Pr\left( Z < E[Z] - t\right)  \leq  e^{-2t^2/w}.$$
\end{corollary}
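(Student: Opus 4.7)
The plan is to apply the Method of Bounded Differences from Theorem~\ref{thm:MOBD} with one independent variable per packet, representing its slot choice in the window, and to show that the resulting Lipschitz constants are all equal to one.

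First I would fix the independent variables. Let $m$ be the number of packets active in this window, and let $X_1, \ldots, X_m$ denote their slot selections, each drawn independently and uniformly at random from $\{1, \ldots, w\}$. Express $Z = \sum_{j=1}^{w} Y_j$, where $Y_j = \mathbf{1}[\text{slot } j \text{ is chosen by at least two packets}]$. The summands $Y_j$ are mutually dependent, but $Z$ is a deterministic function of the independent $X_i$'s, which is the input format required by Theorem~\ref{thm:MOBD}.

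Next I would verify the bounded-differences condition. Fix an index $i$ and two distinct values $a \neq b$ for $X_i$; altering $X_i$ from $a$ to $b$ changes only the indicators $Y_a$ and $Y_b$. For slot $a$, the count decreases by one, so $Y_a$ can only drop from $1$ to $0$, and only when slot $a$ previously had exactly two packets; the change in $Y_a$ thus lies in $\{-1, 0\}$. Symmetrically, the count at slot $b$ increases by one, so the change in $Y_b$ lies in $\{0, 1\}$. The net shift $\Delta Z$ therefore lies in $\{-1, 0, 1\}$, yielding the Lipschitz constant $c_i = 1$ for every $i$. The main obstacle is precisely this Lipschitz verification: a naive estimate that treats the two affected slots independently would give $c_i = 2$ and a weaker exponent; observing that one indicator can only go down while the other can only go up is what sharpens the constant to $1$.

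Finally, plugging $c_i = 1$ into Theorem~\ref{thm:MOBD} gives $\sum_i c_i^2 = m$, and since the number of packets contending in a window of size $w$ satisfies $m \leq w$ in the regimes relevant to the subsequent analyses of Sections~\ref{sec:beb-analysis}--\ref{sec:stb-analysis}, we have $\sum_i c_i^2 \leq w$. Both the upper and lower tail bounds of Theorem~\ref{thm:MOBD} then reduce to $e^{-2t^2/w}$, which is exactly the statement of Corollary~\ref{cor:whp}.
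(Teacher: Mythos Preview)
Your approach is the same as the paper's: apply Theorem~\ref{thm:MOBD} with one independent variable $X_i$ per packet, verify Lipschitz constant $c_i=1$, and read off the tail bound. Your Lipschitz verification is in fact cleaner than the paper's, which simply asserts that moving one ball changes the collision count by at most one.

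The gap is in your final step. You assert $m\le w$ ``in the regimes relevant to the subsequent analyses,'' but this is false in most of them: the BEB lower bound puts at least $7n/8$ packets in a window of size $n/8$; the LLB and LB lower bounds put $\Theta(n)$ packets in windows of size $cn/\lg\lg\lg n$ and $cn/\lg\lg n$; the STB lower bound puts more than $n/2$ packets in a window of size $n/4$. In each of these cases $m>w$, so $\sum_i c_i^2=m>w$ and Theorem~\ref{thm:MOBD} only yields $e^{-2t^2/m}$, which is weaker than the stated $e^{-2t^2/w}$. The paper's own proof is sloppy at exactly this spot---it introduces $n$ variables $X_1,\ldots,X_n$ but then writes the denominator as $\sum_{w}1=w$---so neither argument actually establishes the corollary as literally stated when the window is overloaded. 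Fortunately every downstream use only needs a bound of the form $e^{-2t^2/\Theta(n)}$, and the chosen values of $t$ still drive this polynomially or exponentially small; the clean fix is to state and prove the corollary with the number of packets, not $w$, in the denominator.
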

\begin{proof}
Let $Z$  denote the number of collisions in window $i$ of a windowed backoff algorithm. We may view $Z$ as a function $Z(X_1, ..., X_n)$, where for each $k\in\{1, ..., n\}$, $X_k$ is a random variable with a value in $\{1, ..., w\}$ corresponding to the bin in which the $k^{\mbox{\tiny th}}$ ball lands. Note that $Z$ is certainly a function of these $X_k$ variables, although we do not specify the function itself, nor do we need to in order to use Theorem~\ref{thm:MOBD}. 

The $X_k$ variables are independent, since the bin in which ball $k$ lands is independent of the bin in which any other ball lands. The $X_k$ variables satisfy:\vspace{-0pt}
$$| Z(X_1, ..., X_{i} = b, ...,  X_{w}) - Z(X_1, ..., X_{i} = b', ...,  X_{w})| \leq 1 $$

\noindent since if a ball is moved from one bin into another, then the number of collisions either increases or decreases by at most $1$.

By Theorem~\ref{thm:MOBD}, w.h.p. $Z$ is tightly bounded to its expectation:\vspace{-5pt}

\begin{eqnarray*}
Pr\left( Z > E[Z] + t\right)  &\leq &  e^{-2t^2/(\sum_{w} 1)}\\
&= & e^{-2t^2/w}
\end{eqnarray*}
\noindent and:
\begin{eqnarray*}
Pr\left( Z < E[Z] - t\right)  &\leq &  e^{-2t^2/(\sum_{w} 1)}\\
&= & e^{-2t^2/w}
\end{eqnarray*}
which completes the argument.\qed
\end{proof}
 
A similar result holds for the number of successes in a window; equivalently, the number of bins that contain a single ball.

\begin{corollary}\label{cor:successes}
Let $Z$ be the number of successes in a window of size $w$. Then:
$$Pr\left( Z > E[Z] + t\right)  \leq  e^{-2t^2/w}\vspace{-3pt}$$
$$Pr\left( Z < E[Z] - t\right)  \leq  e^{-2t^2/w}.$$
\end{corollary}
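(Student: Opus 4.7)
The plan is to mirror the proof of Corollary~\ref{cor:whp} essentially verbatim, swapping ``collision'' for ``singleton bin'' throughout. First I would view $Z$ as a function $Z(X_1, \ldots, X_n)$ of the independent, uniformly distributed bin-choice variables $X_k \in \{1, \ldots, w\}$, one per ball; independence follows from the fact that each ball chooses its slot independently under the windowed backoff algorithm. Then I would verify the bounded-differences hypothesis of Theorem~\ref{thm:MOBD} and apply it to obtain concentration of $Z$ about $E[Z]$.

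The key computation is showing that replacing $X_i = b$ by $X_i = b'$ changes $Z$ by at most a bounded amount. Only the occupancies of bins $b$ and $b'$ are affected, and each of these contributes a single indicator $\mathbf{1}[\text{bin contains exactly one ball}]$ to $Z$. A short case analysis on the prior occupancies (zero, one, two, or $\geq 3$ balls in each of bins $b$ and $b'$) shows that each of the two indicators can change by at most $1$ in absolute value, so the per-coordinate sensitivity $c_i$ is a small constant. Substituting into Theorem~\ref{thm:MOBD} then yields the exponential tail bounds in the claim, following exactly the algebra used in Corollary~\ref{cor:whp}.

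The hard part is a mild constant-factor subtlety: in the worst case, moving ball $i$ can flip the singleton indicators of \emph{both} bin $b$ and bin $b'$ simultaneously in the same direction (for instance, $b$ drops from two balls to one while $b'$ rises from zero to one), giving a worst-case sensitivity of $2$ rather than $1$. This only alters the constant in the exponent of the McDiarmid-style bound; since the corollary will be invoked in the sequel with $t$ polynomial in $n$ and $w$ to achieve high probability, the stated asymptotic form survives intact, paralleling exactly how Corollary~\ref{cor:whp} is used downstream.
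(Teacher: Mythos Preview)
Your proposal is correct and is essentially the same approach as the paper's: the paper simply states that the proof of Corollary~\ref{cor:successes} is nearly identical to that of Corollary~\ref{cor:whp} and omits it. Your write-up is in fact more careful than the paper, since you correctly notice that moving one ball can flip the singleton indicators of \emph{both} affected bins in the same direction (e.g., $b$ goes from two balls to one while $b'$ goes from zero to one), giving a worst-case sensitivity of $c_i=2$ rather than $1$; as you say, this only perturbs the constant in the exponent and leaves every downstream w.h.p.\ application (where $t$ is chosen of order $\sqrt{n\ln n}$) completely intact.
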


The proof for Corollary~\ref{cor:successes} is nearly identical to that of Corollary~\ref{cor:whp}, and so we omit it here.  


\subsection{Upper and Lower Bounding Collisions in BEB}\label{sec:beb-analysis}

\begin{theorem}
For a single batch of $n$ packets, with high probability, $\mathcal{C}_{\mbox{\tiny BEB}} = \Theta(n)$. 
\end{theorem}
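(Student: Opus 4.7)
The plan is to analyze BEB as a synchronized balls-into-bins process. Because all $n$ packets begin at window size $4$ and every unsuccessful packet doubles its window at the end of its current window, all surviving packets always share the same window. Let $w_i = 4\cdot 2^i$ denote the $i^{\text{th}}$ window size and $k_i$ the number of packets entering it. In each window, packets pick slots independently and uniformly; packets in singleton slots succeed, while packets in multiply-occupied slots retry in window $i{+}1$. Since $\mathcal{C}_{\mbox{\tiny BEB}}$ counts slots (not balls), both bounds require controlling the number of slots that contain at least two packets, summed over all windows.

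For the lower bound $\mathcal{C}_{\mbox{\tiny BEB}} = \Omega(n)$, I focus on sufficiently crowded windows. Since at most $w_j$ packets succeed in window $j$, we have $k_i \ge n - \sum_{j<i} w_j \ge n - w_i$, so $k_i \ge 2w_i$ whenever $w_i \le n/3$. In such a window, the expected fraction of non-collision slots equals $(1-1/w_i)^{k_i-1}(w_i+k_i-1)/w_i$; bounding the first factor by Fact~\ref{fact:taylor-upper} and invoking $k_i \ge 2w_i$ shows this fraction is at most some constant $\alpha<1$, so the expected number of collision slots is at least $(1-\alpha)w_i$. For $w_i = \Omega(\log n)$, Corollary~\ref{cor:whp} with $t = (1-\alpha)w_i/2$ yields $\Omega(w_i)$ collision slots w.h.p. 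Summing this lower bound over the geometric sequence $w_i \in [\Omega(\log n),\, n/3]$ gives $\Omega(n)$, and a union bound over the $O(\log n)$ windows preserves the high-probability guarantee.

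For the upper bound $\mathcal{C}_{\mbox{\tiny BEB}} = O(n)$, I split the execution into two phases. In the \emph{small-window phase} ($w_i < n$), the number of collision slots in window $i$ is trivially at most $w_i$, and the geometric sum $\sum_{w_i < n} w_i = O(n)$ suffices. In the \emph{large-window phase} ($w_i \ge n$), every collision slot contains at least two packets, so the collision-slot count is at most $k_i/2$; it therefore suffices to show $\sum_{w_i\ge n} k_i = O(n)$. I establish this by showing $k_i$ decays geometrically: for $w_i \ge 2n \ge 2k_i$, Fact~\ref{fact:taylor-lower} gives $(1-1/w_i)^{k_i-1} \ge e^{-(k_i-1)/(w_i-1)} \ge e^{-1}$, so $E[S_i] \ge k_i/e$. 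Corollary~\ref{cor:successes} with $t = k_i/(2e)$ then yields $S_i \ge k_i/(2e)$ w.h.p., whence $k_{i+1} \le (1-1/(2e))k_i$, and summing the resulting geometric series gives $O(n)$.

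The main obstacle is the tail regime of the upper bound: concentration via Corollary~\ref{cor:successes} requires $k_i = \Omega(\sqrt{n \log n})$, so once $k_i$ drops below this threshold the geometric-decay argument no longer applies directly. For those remaining windows I must instead bound the expected collision-slot count per window by $\binom{k_i}{2}/w_i = O(\log n)$; since all packets succeed within $O(\log n)$ further windows w.h.p., their total contribution is only $O(\log^2 n) = o(n)$. A similar boundary subtlety appears for the smallest windows in the lower bound ($w_i = O(\log n)$), but there the omitted contribution is at most a constant and does not affect the $\Omega(n)$ conclusion.
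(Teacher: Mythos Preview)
Your proposal is correct in outline and both halves would go through, but the upper bound takes a genuinely different route from the paper's. For the lower bound, the paper is more economical: it picks a \emph{single} window of size $n/8$ (where at least $7n/8$ packets survive), observes that a constant fraction of its slots receive two or more balls in expectation, and applies Corollary~\ref{cor:whp} once. Your version, summing $\Omega(w_i)$ collision slots over the geometric range $w_i \in [\Omega(\log n),\, n/3]$, reaches the same $\Omega(n)$ with more machinery than needed.

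For the upper bound, the paper avoids tracking $k_i$ altogether. It \emph{pessimistically} assumes all $n$ packets persist into every large window $w_i = n2^i$, computes $E[Z^i] = O(n/2^i)$ directly from the collision probability $O(2^{-2i})$, applies Corollary~\ref{cor:whp} with $t = n/2^i$, and then invokes the known refinement that BEB terminates within $O(\lg\lg n)$ windows after reaching size $n$~\cite{BenderFaHe05,zhou2019singletons}. Summing the geometrically decreasing $O(n/2^i)$ over $i \le O(\lg\lg n)$ gives $O(n)$ with no tail regime to worry about. Your approach instead proves an explicit geometric decay $k_{i+1} \le (1-1/(2e))k_i$ on the surviving population; this is more self-contained in spirit but forces you to handle the tail separately, which is exactly the obstacle you identified.

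Two small gaps in that tail handling. First, bounding the \emph{expected} collision count by $\binom{k_i}{2}/w_i = O(\log n)$ per window does not by itself give a w.h.p.\ statement; you should use the deterministic bound (collision slots $\le k_i/2$) instead, which over $O(\log n)$ remaining windows is still $o(n)$. Second, the concentration threshold from Corollary~\ref{cor:successes} is $k_i = \Omega(\sqrt{w_i\log n})$, not $\Omega(\sqrt{n\log n})$, because the denominator is the window size and $w_i$ is growing; this does not break your argument but should be stated correctly.
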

\begin{proof}
First, we show that $\mathcal{C}_{\mbox{\tiny BEB}} = \Omega(n)$. Note that for the first windows of size $1, 2, ...,  n/16$, there are less than  $n/8$ slots by the sum of a geometric series, and so less than $n/8$ packets can succeed over these windows. Consider the next window of size $n/8$, and treat this as a balls-in-bins problem with $n/8$ bins and at least $(7/8)n$ balls. We will count the number of bins that contain $2$ balls, since this will then allow us to show a lower bound on the number of collisions. 

Let the indicator random variable $Y_j=1$ if bin $j$ contains $2$ balls; otherwise, $Y_j=0$. It is easy to see that $Pr(Y_j=1)$ is a constant. Letting $Y=\sum_j Y_j$, we have $E[Y] = \Omega(n)$. By Corollary~\ref{cor:whp}:
\begin{eqnarray*} 
Pr\left( Y < E[Y] - t\right)  &\leq&  e^{-2t^2/(n/8)}\\
&=& 2^{-\Theta(n)}
\end{eqnarray*}
\noindent for $t=k'n$, for some suitably small constant $k'>0$, which yields the lower bound.

For the upper bound, in the execution of~\B, consider a contention window of size $n 2^i$ for an integer $i\geq 1$; let the windows be indexed by $i$. Note that over all  windows up to and including size $n$, we have $O(n)$ collisions since there are $O(n)$ slots by the sum of a geometric series. 

Let the indicator random variable $Z_j=1$ if slot $j$ in window $i$  contains a collision; $Z_j=0$ otherwise. Considering the corresponding balls-into-bins problem,  the probability of a collision is equal to the probability that $2$ or more balls are dropped into this bin. Pessimistically, assume exactly $n$ balls are dropped in each consecutive window $i$, since this only increases the probability of a collision; in actuality, packets finish over these windows and reduce the probability of collisions. 
\begin{eqnarray*}
Pr(Z_j=1)  &= & \sum_{\ell=2}^{n} \binom{n}{\ell}\left(\frac{1}{n2^i}\right)^{\ell}  \left(1- \frac{1}{n2^i}\right)^{n-\ell}\\
& = & \sum_{\ell=2}^{n} O\left(\frac{1}{2^{i\ell}}\right)  \mbox{~~by Fact~\ref{fact:taylor-upper}}\\
& = & O\left(\frac{1}{2^{2i}}\right)
\end{eqnarray*}

 Let $Z^i=\sum_{j=1}^{n2^i } Z_j$ be an upper bound on  the number of collisions in window $i$. By linearity of expectation:\vspace{-4pt}
$$E[Z^i] = \sum_{j=1}^{n2^i } E\left[Z_j\right] =  n2^i \cdot O\left(\frac{1}{2^{2i}}\right)= O\left(\frac{n}{2^{i}}\right)$$

\noindent There are dependencies between the $Z_j$ variables. To handle this complication,  we employ Corollary~\ref{cor:whp} to argue that $Z^i$ is tightly bounded to its expectation; that is, for some constant $d>0$, we have:
$$Pr\left( Z^i > \frac{dn}{2^i} + t\right) \leq e^{-2t^2/(n2^i)}$$

\noindent By~\cite{BenderFaHe05,zhou2019singletons}, w.h.p. \B~finishes within $O(\lg\lg n)$ windows subsequent to a window of size $n$; therefore, $i \leq d'\lg\lg n$ for some constant $d'>0$. Setting $t = n/2^i$, and plugging into the above:\vspace{-5pt}

\begin{eqnarray*}
Pr\left( Z^i > \frac{dn}{2^i} + \frac{n}{2^i}\right)  &\leq &  e^{-2n/2^{3i}}\\
& \leq & e^{-2n/\lg^{3d'}n}
\end{eqnarray*}

\noindent Taking a union bound over $i \leq d'\lg\lg n$ windows, it follows that w.h.p. the number of collisions under  BEB is upper bounded by $\sum_{i=0}^{d'\lg\lg n} Z^i = \sum_{i=1}^{d'\lg\lg n} O\left(\frac{n}{2^i}\right) = O(n)$.\qed
\end{proof}


\subsection{Lower Bounding Collisions in LLB}\label{sec:llb}

In this section, we derive a lower bound on the number of collisions experienced by LLB. For ease of analysis, we assume a larger initial (but still constant) window size for LLB; we defer discussion of this for now. 

To prove our lower bound, we consider a variant of LLB that we denote by LLB*, and which executes as follows: a contention window of size $w_i$ is used consecutively $(1/2)\lg\lg(w_i)$ times, and then $w_{i+1} = 2w_i$ for $i\geq 0$.\footnote{We highlight that the constant $1/2$ is not special. As the proof of Lemma~\ref{lem:LLB-star} shows, any positive constant strictly less than $\ln(2)$ will suffice. A similar algorithm is examined by Bender et al.~\cite{BenderFaHe05} in order to obtain an upper bound on the makespan of LLB.}  Again, we defer discussion of the size of the initial window until after Lemma~\ref{lem:LLB-star}.

We argue that the $i^{th}$ contention window under LLB* is larger than the $i^{th}$ contention window under LLB, assuming that we set the initial window size for each to be a sufficiently large constant. This  implies that a lower bound on the number of collisions experienced by LLB* is a lower bound on the number of collisions experienced by LLB. The remainder of our analysis is devoted to showing a lower bound on the number of collisions experienced by LLB*.

\begin{lemma}\label{lem:LLB-star}
The window size doubles in fewer windows under LLB* than under LLB for windows of size greater than  $2^{2^{2/(\ln(2)-1/2)}}$.
\end{lemma}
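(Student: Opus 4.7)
The plan is to compare, starting from a window of size $w$, the number of consecutive windows each algorithm requires before the window has at least doubled; call these $K_{\text{LLB}}(w)$ and $K_{\text{LLB*}}(w)$. By the construction of LLB*, $K_{\text{LLB*}}(w) \le \lceil (1/2)\lg\lg w \rceil \le (1/2)\lg\lg w + 1$. It therefore suffices to establish a lower bound of roughly $(\ln 2)\lg\lg w$ on $K_{\text{LLB}}(w)$.

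To analyze LLB, I would track the maximum per-step growth over the phase when the current window $w_i$ satisfies $w \le w_i < 2w$. Throughout this phase $\lg\lg w_i \ge \lg\lg w$, so the post-ceiling update obeys
$$w_{i+1} \;\le\; \left(1 + \tfrac{1}{\lg\lg w}\right) w_i + 1.$$
Unrolling this linear recurrence yields
$$w_K \;\le\; \left(1 + \tfrac{1}{\lg\lg w}\right)^K (w + \lg\lg w) - \lg\lg w.$$
Imposing $w_K \ge 2w$ and using $\ln(1+x) \le x$ produces a bound of the form
$$K_{\text{LLB}}(w) \;\ge\; (\ln 2)\lg\lg w - \delta(w),$$
where $\delta(w) = O\bigl((\lg\lg w)^2/w\bigr)$ is the correction contributed by the $+1$ arising from the ceiling; for $w$ large enough, this correction is $o(1)$.

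Subtracting the two bounds gives the gap
$$K_{\text{LLB}}(w) - K_{\text{LLB*}}(w) \;\ge\; (\ln 2 - 1/2)\lg\lg w - O(1).$$
The hypothesis $w > 2^{2^{2/(\ln 2 - 1/2)}}$ is equivalent to $\lg\lg w > 2/(\ln 2 - 1/2)$, which forces $(\ln 2 - 1/2)\lg\lg w \ge 2$. Since this $2$ comfortably dominates the $O(1)$ correction coming from the ceiling in LLB's update together with the rounding in LLB*'s repeat count, we obtain the strict inequality $K_{\text{LLB*}}(w) < K_{\text{LLB}}(w)$, as required.

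The main obstacle is precisely managing these constant-sized additive errors. Because $\ln 2 - 1/2$ is a small gap and $\lg\lg w$ grows extremely slowly, the argument must ensure that the ceiling in LLB (which may add $+1$ each step) and the integer rounding in LLB*'s repeat count cannot collectively eat up the separation. The chosen threshold intentionally demands roughly twice the minimal $\lg\lg w$ needed to make the leading difference exceed $1$, providing a cushion of $2$ that makes the conclusion robust without a delicate step-by-step bookkeeping of rounding.
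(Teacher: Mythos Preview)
Your approach is essentially the paper's: bound from above the number of windows LLB* needs to double (by $\lceil (1/2)\lg\lg w\rceil \le (1/2)\lg\lg w + 1$), bound from below the number LLB needs (by roughly $(\ln 2)\lg\lg w$), and compare. The paper carries out exactly this comparison and solves $(1/2)\lg\lg w + 1 < (\ln 2)\lg\lg w - 1$ to obtain the stated threshold.

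One difference worth noting: you are more careful than the paper about the ceiling in LLB's update $w \leftarrow \lceil (1+1/\lg\lg w)w\rceil$. The paper silently drops the ceiling and writes $w_{i+k} = \prod_j (1+1/\lg\lg w_{i+j})\,w_i$ as an equality before upper-bounding; you instead carry the additive $+1$ through the recurrence and show it contributes only $\delta(w)=O((\lg\lg w)^2/w)$, which is astronomically small at the threshold $w > 2^{2^{2/(\ln 2 - 1/2)}}$ (where $w \gtrsim 2^{1300}$). This makes your argument slightly more rigorous at the cost of a messier constant; the paper's cleaner $-1$ on the LLB side comes from a floor it did not really need, so in effect the paper's slack absorbs the ceiling error it omitted. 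Your closing ``comfortably dominates'' is informal but correct: at the threshold the gap $(\ln 2 - 1/2)\lg\lg w$ exceeds $2$, while the total additive correction is $1 + \delta(w) < 2$.
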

\begin{proof}
Under LLB, consider any window $W_i$ of size $w_i$ for $i\geq 0$. How many windows are required until the window size becomes at least $2w_i$? It holds that:
\begin{eqnarray*}
w_{i+k} & = & \left(\prod_{j=0}^{k-1} \left( 1 + \frac{1}{\lg\lg w_{i+j}} \right) \right)w_i\\
&\leq& \left(1 + \frac{1}{\lg\lg w_i}\right)^k w_i \\
&\leq & e^{k/\lg\lg w_i} w_i 
\end{eqnarray*}
\noindent where the first line follows by the specification of LLB, the second line follows since $w_i$ is the smallest window size, and the third line follows from Fact~\ref{fact:taylor-upper}. Solving for $e^{k/\lg\lg w_i}  \geq 2$ yields $k \geq  \ln(2) \lg\lg w_i$. That is, at least:
\begin{eqnarray}
\lfloor \ln(2) \lg\lg w_i\rfloor \geq   \ln(2) \lg\lg w_i - 1\label{eq:normal}
\end{eqnarray}

\noindent  windows are required to double the window size under LLB. In contrast, LLB* requires at most:
\begin{eqnarray}
\lceil(1/2)\lg\lg w_i\rceil \leq (1/2)\lg\lg w_i + 1\label{eq:star}
\end{eqnarray}
\noindent windows to double.  Solving:
$$(1/2)\lg\lg(w_i) + 1 < \ln(2)\lg\lg(w_i) - 1 $$
\noindent we find that for $w_i > 2^{2^{2/(\ln(2)-1/2)}} =O(1)$,  the upper bound in Equation~\ref{eq:star} is less than the lower bound in Equation~\ref{eq:normal}. That is, after this window size is reached, LLB* will be doubling its window size in fewer windows than LLB.\qed
\end{proof}

\noindent{\bf Setting Initial Window Size.} By Lemma~\ref{lem:LLB-star}, beyond a window size of $w_{i^*} > 2^{2^{2/(\ln(2)-1/2)}}$, LLB* requires fewer windows to double its window size compared to the window of the same index under LLB. We set LLB's initial window size to be a constant at least as large as $w_{i^*}$. Then, we set LLB*'s initial window size to be at least twice LLB's initial window size. This ensures that LLB*'s window will always be larger than the window of the same index under LLB.\medskip

\subsubsection{A Lower Bound on Collisions for LLB*}

For our analysis of LLB*,  we focus on a window of size  $cn/\lg\lg\lg n$, for some sufficiently small constant $c>0$. Our argument proceeds as follows. Given $\epsilon n$ packets, for any positive constant $\epsilon \leq 1$, we first demonstrate in Lemma~\ref{lem:LLB-succ} an upper bound of $O(n/(\lg\lg n)^{d})$ successes  in any execution over a window of size $cn/\lg\lg\lg n$, where $d>1$ is some constant depending on $\epsilon$ and $c$. Next, in Lemma~\ref{lem:less-n}, we show that $\Theta(n)$ packets survive until the critical window size is reached. Lemmas~\ref{lem:LLB-succ} and~\ref{lem:less-n} imply that there are $\Theta(n)$ packets remaining for $\Omega(\lg\lg n)$ executions of a window of size $cn/\lg\lg\lg n$. Finally, in Theorem~\ref{claim:LLB},  we prove that for each such execution, $\Omega(n/\lg\lg\lg n)$ collisions occur, and this yields a total of $\Omega(n\lg\lg n/\lg\lg\lg n)$ collisions. \vspace{-2pt}

\begin{lemma}\label{lem:LLB-succ}
For any constant $\epsilon\leq 1$, consider $\epsilon n$ packets executing LLB* over a window of size  $cn/\lg\lg\lg n$ for a positive constant $c<\epsilon\lg(e)/2$. With high probability, $O(n/(\lg\lg n)^{d})$ packets succeed in the window for a constant $d>1$ depending on $\epsilon$ and $c$.\vspace{-3pt}
\end{lemma}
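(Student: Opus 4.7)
The plan is to treat a single execution of the window of size $w = cn/\lg\lg\lg n$ under LLB* as a balls-into-bins problem, where $m = \epsilon n$ balls (packets) are dropped independently and uniformly into $w$ bins (slots). Let $S$ denote the number of singleton bins, which equals the number of successes in the window; the goal is to show $S = O(n/(\lg\lg n)^d)$ w.h.p. for a constant $d > 1$.

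The first step is to compute $E[S]$. For each bin $j$, the probability that exactly one of the $m$ balls lands in it is $(m/w)(1-1/w)^{m-1}$, so by linearity of expectation $E[S] = m(1-1/w)^{m-1}$. Using Fact~\ref{fact:taylor-upper}, this is at most $m\,e^{-(m-1)/w}$. Plugging in $m = \epsilon n$ and $w = cn/\lg\lg\lg n$, the exponent simplifies to $-(\epsilon/c)\lg\lg\lg n + o(1)$. Converting from base $e$ to base $2$ via the identity $e^{-(\epsilon/c)\lg y} = y^{-(\epsilon/c)\lg(e)}$ with $y = \lg\lg n$, I obtain $E[S] = O\bigl(n/(\lg\lg n)^d\bigr)$ where $d = \epsilon\lg(e)/c$. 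The hypothesis $c < \epsilon\lg(e)/2$ forces $d > 2$, which comfortably exceeds the required $d > 1$.

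The second step is to upgrade this expectation bound to a high-probability bound via Corollary~\ref{cor:successes}. Taking $t = E[S]$, I get $Pr(S > 2E[S]) \leq e^{-2E[S]^2/w}$, and the exponent is
$$\frac{2E[S]^2}{w} \;=\; \Theta\!\left(\frac{n\,\lg\lg\lg n}{(\lg\lg n)^{2d}}\right),$$
which is $\omega(\log n)$ because $n$ dominates any polylogarithmic factor. Hence the failure probability is $e^{-\omega(\log n)} \ll 1/n^{c'}$ for any constant $c' > 1$, yielding $S \leq 2E[S] = O\bigl(n/(\lg\lg n)^d\bigr)$ with high probability.

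The main subtlety is tracking the constant $d$ correctly through the change of logarithm base; the hypothesis $c < \epsilon\lg(e)/2$ is precisely what pushes the exponent $\epsilon\lg(e)/c$ strictly above $1$. Other low-order adjustments—replacing $m-1$ by $m$ in the Taylor bound and handling the $1/w$ correction inside the exponential—contribute only $o(1)$ to the exponent and can be absorbed by taking $d$ to be any constant slightly smaller than $\epsilon\lg(e)/c$; since the hypothesis already gives $\epsilon\lg(e)/c > 2$, the conclusion $d > 1$ is preserved.
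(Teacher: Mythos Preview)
Your proof is correct and follows essentially the same balls-in-bins argument as the paper: bound the expected number of singletons using Fact~\ref{fact:taylor-upper} and then apply the bounded-differences concentration of Corollary~\ref{cor:successes}. The only cosmetic differences are that the paper takes $t=\sqrt{n\ln n}$ in the concentration step (rather than $t=E[S]$, which implicitly relies on $E[S]$ being large---true here, but not something you verified) and discards a harmless factor of~$2$ in the exponent, obtaining $d>1$ where you get the slightly sharper $d>2$.
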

\begin{proof}
Let $Y_j=1$ if slot $j$ contains a single packet; otherwise $Y_j=0$. We have: \vspace{-5pt}
\begin{eqnarray}
Pr(Y_j=1) & = & \binom{\epsilon n}{1}\left(\frac{\lg\lg\lg n}{cn}\right) \hspace{-3pt} \left(1- \frac{\lg\lg\lg n}{cn}\right)^{\epsilon n-1} \label{eqn:llb1}\\
& \leq &  \frac{\epsilon\lg\lg\lg n}{c} e^{-\frac{ (\epsilon n - 1)\lg\lg\lg n}{cn}} \label{e:a}\\
& \leq &  \frac{\epsilon\lg\lg\lg n}{c} e^{-\frac{  (1/2)\epsilon n \lg\lg\lg n}{cn}}  \label{e:b}\\
& \leq &  \frac{\epsilon\lg\lg\lg n}{c} e^{-\frac{ \epsilon  \lg\lg\lg n}{2c}} \\
&\leq & \frac{\epsilon\lg\lg\lg n}{c(\lg\lg n)^{\frac{\epsilon\lg(e)}{2c}}}\\
&=& O\left(\frac{1}{(\lg\lg n)^{d}}\right) \mbox{~for a constant $d>1$}\label{eqn:llb2}
\end{eqnarray}
\noindent where Equation~\ref{e:a} line follows from Fact~\ref{fact:taylor-upper}, Equation~\ref{e:b} follows from $\epsilon n\geq 2$ which is true for sufficiently large $n$, and  Equation~\ref{eqn:llb2} follows from noting that $\epsilon\lg(e)/(2c) > 1$ for any positive constant $\epsilon\leq 1$ so long as the positive constant $c< \epsilon \lg(e)/2$.

Let $Y=\sum_{j=1}^{cn/\lg\lg\lg n} Y_j$ be the number of successes over the entire window. From the above, we have:\vspace{-5pt}
$$E[Y] = O\left(\frac{n}{(\lg\lg n)^{d} \lg\lg\lg n}\right) $$

\noindent By Corollary~\ref{cor:successes}:
\begin{eqnarray*}
Pr\left( Y > E[Y] +  t \right)  & \leq & e^{-\frac{2t^2 \lg\lg\lg n}{cn} }
\end{eqnarray*}
and letting $t = \sqrt{n\ln n}$ is sufficient to prove that the number of successes in the window is $O\left(\frac{n}{(\lg\lg n)^{d} \lg\lg\lg n}\right)$ with high probability. Noting that: 
$$(\lg\lg n)^{d} \lg\lg\lg n = (\lg\lg n)^{d+ \frac{\lg\lg\lg\lg n}{\lg\lg\lg n}} =  (\lg\lg n)^{d+ o(1)}$$
completes the proof.\qed
\end{proof}

We now argue that, for LLB*, very few packets succeed prior to a window of size $cn/\lg\lg\lg n$.

\begin{lemma}\label{lem:less-n}
Consider a single batch of $n$ packets executing LLB*. With high probability, $o(n)$ packets succeed prior to the first window of size $cn/\lg\lg\lg n$ for a sufficiently small  constant $c>0$.
\end{lemma}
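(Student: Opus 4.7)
I would bound the expected number of successes in every window instance of size at most $W^* := cn/\lg\lg\lg n$, invoke Corollary~\ref{cor:successes} per window for high-probability concentration, and union-bound over window instances. The intuition is that while $\Theta(n)$ packets remain, any window of size $w\leq W^*$ is overloaded by a factor of at least $\Omega(\lg\lg\lg n)$, so the per-slot success probability decays like $(\lg\lg n)^{-\Omega(1/c)}$; for sufficiently small $c$ this decay beats both the $w$ slots inside the window and the $(1/2)\lg\lg(w)$ repetitions LLB* uses per window size.

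\textbf{Setup via a stopping time.} Fix a small constant $\epsilon>0$ and let $\tau$ be the first window instance at the end of which the cumulative number of successes reaches $\lceil \epsilon n\rceil$; the target is to show $\tau$ does not occur before the first window of size $W^*$ w.h.p. For any $i<\tau$, at least $(1-\epsilon)n$ packets are alive at the start of window $i$. Write $g(m):=m(1-1/w_i)^{m-1}$ for the expected number of successes given $m$ packets in a window of size $w_i$. Differentiating $\ln g$ shows that $g$ is decreasing in $m$ on the range $m>w_i$, and this range is exactly the one that matters, since $w_i\leq W^* < (1-\epsilon)n$ for all large $n$. Hence the maximum of the conditional expectation over allowed $m$ is attained at $m=(1-\epsilon)n$, yielding $E[S_i\mid i\leq\tau]\leq n\,e^{-(1-\epsilon)n/w_i+O(1)}$ by Fact~\ref{fact:taylor-upper}.

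\textbf{Summing across window instances.} Summing over window sizes $w=w_0,2w_0,\dots,W^*$, each with $\tfrac{1}{2}\lg\lg(w)$ repetitions, the bound forms a geometric series in $e^{-(1-\epsilon)n/w}$ dominated by its last term $w=W^*$:
$$\sum_{w\leq W^*}\tfrac{1}{2}\lg\lg(w)\cdot n\,e^{-(1-\epsilon)n/w+O(1)} \;=\; O\!\left(\frac{n\lg\lg n}{(\lg\lg n)^{(1-\epsilon)/(c\ln 2)}}\right),$$
which is $o(n)$ as soon as $c<(1-\epsilon)/\ln 2$. Corollary~\ref{cor:successes} then adds at most $O(\sqrt{w_i\log n})$ deviation per window w.h.p.; this too is a geometric sum dominated by the $w=W^*$ term, so the aggregate slack is $O(\lg\lg n\cdot\sqrt{n\log n})=o(n)$. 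A union bound over the $O(\log n\cdot\log\log n)$ window instances preceding $W^*$ upgrades the expectation bound to a high-probability bound, so the total number of successes before $W^*$ is $o(n)$ w.h.p., contradicting $\tau\leq W^*$ and establishing the lemma.

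\textbf{Main obstacle.} The subtle point is the direction of monotonicity of $g(m)=m(1-1/w)^{m-1}$: since $g$ is \emph{decreasing} in $m$ on $m>w$, naively plugging in $m=n$ would \emph{undercount} $E[S_i]$, and there is no valid pathwise coupling asserting ``more alive packets $\Rightarrow$ fewer successes'' once $m$ dips near $w$. The stopping-time construction sidesteps both pitfalls by deterministically forcing $m\geq(1-\epsilon)n$ on the event of interest, making $m=(1-\epsilon)n$ a legitimate worst case. Beyond this, the only quantitative care needed is fixing $c$ small enough (e.g.\ $c<1/(2\ln 2)$ with $\epsilon=1/2$) so that the exponent $(1-\epsilon)/(c\ln 2)$ in Step~2 strictly exceeds $1$.
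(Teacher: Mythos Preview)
Your proof is correct, but it takes a genuinely different route from the paper's.

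The paper splits the pre-$W^*$ regime into two phases. Phase (a) covers all windows of size at most $n/(4\ln n)$: here a per-slot calculation shows the probability of \emph{any} success is $O(1/n^2)$, so a union bound over $O(n)$ slots gives zero successes w.h.p. Phase (b) covers the $O(\lg\lg n)$ distinct window sizes between $n/(4\ln n)$ and $W^*$: the paper pessimistically enlarges each to size $W^*$ and then invokes Lemma~\ref{lem:LLB-succ} as a black box to bound successes per window by $O(n/(\lg\lg n)^d)$ with $d>1$. Multiplying by the $O(\lg\lg n)$ repetitions per size yields $o(n)$.

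You instead treat all window instances uniformly with a stopping-time device that pins the number of live packets at $\geq (1-\epsilon)n$, compute the per-window success expectation directly via the monotonicity of $g(m)=m(1-1/w)^{m-1}$, and sum a geometric series dominated by the $w=W^*$ term. This avoids the two-phase split, does not cite Lemma~\ref{lem:LLB-succ}, and makes the ``at least $\epsilon n$ packets remain'' step fully rigorous rather than implicit. The paper's route is more modular (it reuses Lemma~\ref{lem:LLB-succ}) but is informal about why $\epsilon n$ packets are guaranteed at each step of phase (b); your stopping-time argument handles that cleanly at the cost of re-deriving the per-window bound. Both yield the same $o(n)$ conclusion under essentially the same constraint on $c$.
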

\begin{proof}
To begin, note that w.h.p. no packet finishes in any slot up until the end of all windows of size $n/(4\ln n)$. To see this, note that for any such slot, the probability that it contains a single packet is at most: 
\begin{eqnarray}
\binom{n}{1}\left(  \frac{4\ln n}{n}\right) \left(1 - \frac{4\ln n}{n}\right)^{n-1} & \leq & \frac{4\ln n}{e^{4(n-1)\ln n/n}}\label{eqn:less-n}\\
&=& \frac{4\ln n}{n^{4(n-1)/n}}\\
& \leq & \frac{4\ln n}{n^3}\\
& \leq & \frac{1}{n^2}\label{eqn:llb3}
\end{eqnarray}
\noindent where the first line follows from Fact~\ref{fact:taylor-upper}, and the third line follows from $n\geq 4$. Taking a union bound over all $O(n)$ slots up to this point yields a probability that any packet succeeds is at most $O(1/n).$

Now, we consider how many packets can succeed between those windows with size  $n/(4\ln n)$ and $cn/\lg\lg\lg n$. Starting with a window of size  $n/(4\ln n)$, how many intervening {\it unique-sized} windows exist before reaching the first window of size $cn/\lg\lg\lg n$? This is given by solving for $i$ in the following:
$$\frac{2^i n}{4\ln n} \leq \frac{cn}{\lg\lg\lg n}$$
yielding $i\leq \lg(4c\ln n) - \lg\lg\lg\lg(n) = k\lg\lg n$ for some positive constant $k$.

Pessimistically assume each such intervening window has size $cn/\lg\lg\lg n$ (this can only reduce the number of collisions). By the above argument, we have $\epsilon n$ packets for some constant $\epsilon>0$, and for $c$ sufficiently small, and so Lemma~\ref{lem:LLB-succ} guarantees w.h.p. that each window results in  $O(n/(\lg\lg n)^{d})$ successful packets for some constant $d>1$. 

Under LLB*, each such intervening window executes $O(\lg\lg n)$ times. Therefore, the total number of packets finished over these windows is: $$O\left(\frac{n}{(\lg\lg n)^{d-1}}\right) = o(n)$$
\noindent since $d>1$, and this completes the argument.\qed
\end{proof}

\begin{theorem}\label{claim:LLB}
For a single batch of $n$ packets, with high probability, $\mathcal{C}_{\mbox{\tiny LLB*}} = \Omega\left(\frac{n\lg\lg n}{\lg\lg\lg n}\right)$. 
\end{theorem}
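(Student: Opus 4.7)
The plan is to combine Lemmas~\ref{lem:LLB-succ} and~\ref{lem:less-n} to argue that $\Theta(n)$ packets remain contending throughout every repeated execution of a window of size $w = cn/\lg\lg\lg n$, and then to show that each such execution contributes $\Omega(n/\lg\lg\lg n)$ collisions. Since LLB* repeats this window $(1/2)\lg\lg w = \Omega(\lg\lg n)$ times by its specification, multiplying yields the claimed $\Omega(n\lg\lg n/\lg\lg\lg n)$ bound.

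First, I would invoke Lemma~\ref{lem:less-n}, which guarantees that at most $o(n)$ packets succeed before the first window of size $w$. Hence $(1 - o(1))n = \Theta(n)$ packets are still contending at that point. Next, Lemma~\ref{lem:LLB-succ} bounds the number of successes in any single execution of a window of size $w$ by $O(n/(\lg\lg n)^{d})$ w.h.p., for some constant $d>1$. Taking a union bound over all $\Omega(\lg\lg n)$ repetitions of the window, the cumulative number of successes is $O(n/(\lg\lg n)^{d-1}) = o(n)$, so at least $\epsilon n$ packets (for some positive constant $\epsilon$) remain contending at the start of each execution. This inductively validates the hypothesis of Lemma~\ref{lem:LLB-succ} at every stage.

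Next, I would lower-bound the number of collisions in a single execution with $m = \Theta(n)$ contending packets. Let $Z_j$ indicate that slot $j$ receives at least two packets, and set $p = 1/w = \lg\lg\lg n/(cn)$. Then
\[
Pr(Z_j = 1) \;=\; 1 - (1-p)^{m} - m\, p\, (1-p)^{m-1}.
\]
Since $mp = \Theta(\lg\lg\lg n) \to \infty$, applying Fact~\ref{fact:taylor-upper} shows that both $(1-p)^{m}$ and $m p (1-p)^{m-1}$ tend to $0$, so $Pr(Z_j = 1) \geq 1/2$ for all sufficiently large $n$. By linearity, the total count $Z = \sum_{j=1}^{w} Z_j$ satisfies $E[Z] \geq w/2 = \Omega(n/\lg\lg\lg n)$. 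Corollary~\ref{cor:whp} with $t = \sqrt{w \ln n} = o(E[Z])$ then yields $Z = \Omega(n/\lg\lg\lg n)$ w.h.p.

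Taking a union bound over the $\Omega(\lg\lg n)$ repetitions produces a total of $\Omega(n \lg\lg n/\lg\lg\lg n)$ collisions, completing the argument. I expect the main obstacle to lie not in the per-window calculation, which is a standard high-load balls-into-bins estimate, but in carefully composing the high-probability events across all repetitions so that the ``$\epsilon n$ contending packets'' hypothesis of Lemma~\ref{lem:LLB-succ} remains valid simultaneously at every stage; this hinges on the constant $d>1$ in Lemma~\ref{lem:LLB-succ} providing enough slack to absorb a factor of $\lg\lg n$ in the union bound without erasing the $\Theta(n)$ survivor count.
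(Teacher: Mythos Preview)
Your proposal is correct and follows essentially the same approach as the paper: invoke Lemma~\ref{lem:less-n} to ensure $\Theta(n)$ survivors at the start of the window of size $cn/\lg\lg\lg n$, use Lemma~\ref{lem:LLB-succ} to maintain $\Theta(n)$ survivors across all $\Omega(\lg\lg n)$ repetitions, compute that each slot has $\Omega(1)$ collision probability so each repetition yields $\Omega(n/\lg\lg\lg n)$ collisions via Corollary~\ref{cor:whp}, and multiply. Your explicit accounting of the cumulative $O(n/(\lg\lg n)^{d-1})=o(n)$ successes across repetitions is exactly the slack the paper relies on, and your closing remark about the dependency between $d$, $c$, and $\epsilon$ matches the paper's own caveat following the proof.
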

\begin{proof}
We focus on a window of size $cn/\lg\lg\lg n$ for a sufficiently small constant $c>0$. Conservatively, we do not count collisions prior to this window (counting these can only improve our result).

By Lemma~\ref{lem:less-n}, w.h.p. the number of packets that succeed prior to reaching the first window of size $cn/\lg\lg\lg n$ is $o(n)$. Therefore, at the start of this window, we have $\epsilon n$ packets for some positive constant $\epsilon$. Define an indicator random variable $X_j$ such that $X_j=1$ if slot $j$ contains a collision; otherwise, $X_j=0$. Then, we have $Pr(X_j=1)$:\vspace{-5pt} 
\begin{eqnarray}
& = & 1 - \sum_{k=0}^1 \binom{\epsilon n}{k}\hspace{0pt}\left(\frac{\lg\lg\lg n}{cn}\right)^k \hspace{-3pt} \left(1- \frac{\lg\lg\lg n}{cn}\right)^{\epsilon n-k}\label{eqn:llb4}\\
& \geq & 1 - \left( 1-\frac{\lg\lg\lg n}{cn}\right)^{\epsilon n} \hspace{-8pt}- \frac{\epsilon \lg\lg\lg n}{c}\left(1-\frac{\lg\lg\lg n}{cn} \right)^{\epsilon n - 1 } \nonumber \\
& \geq & 1-  \left( 1-\frac{\lg\lg\lg n}{cn}\right)^{\epsilon n}  \left( 1 + \frac{ \epsilon\lg\lg\lg n / c}{1- \lg\lg\lg n/(cn)}  \right)\nonumber \\
& \geq & 1-  \left( 1-\frac{\lg\lg\lg n}{cn}\right)^{\epsilon n}  \left( 1 + \frac{2 \epsilon\lg\lg\lg n}{c}   \right) \nonumber\\
& \geq & 1 - O \left( \frac{\epsilon\lg\lg\lg n}{ c\left( \lg\lg n \right)^{\epsilon\lg(e)/c}} \right) \nonumber\\
& = & \Omega(1) \label{eqn:llb5}
\end{eqnarray}
\noindent where the fifth line follows from Fact~\ref{fact:taylor-upper} (with $x = \lg\lg\lg(n)/(cn))$.

Let $X = \sum_{j=1}^{cn/\lg\lg\lg n} X_j$. By linearity of expectation, the expected number of collisions over the contention window $W$ is:\vspace{-3pt}
$$E[X] = \sum_j E[X_j] = \Omega\left(\frac{n}{\lg\lg\lg n}\right)$$
\vspace{-10pt}

\noindent By Corollary~\ref{cor:whp}:
$$Pr\left( X < E[X] -  t \right)   =  e^{-\frac{2t^2 \lg\lg\lg n}{cn} }$$
\noindent and letting $t = \sqrt{n\ln n}$ is sufficient to prove that the number of collisions in the window is $\Omega\left(\frac{n}{\lg\lg\lg n}\right)$ with high probability.

By specification of LLB*, the window $w$ is executed $(1/2)\lg\lg(cn/\lg\lg\lg n) = \Omega(\lg\lg n)$ times. By Lemma~\ref{lem:LLB-succ}, $O\left(\frac{n}{(\log\log n)^d}\right)$ packets are succeeding in each such execution for some constant $d>1$, assuming the constant $c>0$ is sufficiently small. Thus, there will be at least $\Omega(n)$ packets remaining after each of the  $\Omega(\lg\lg n)$ executions of this window size. This means we can apply the above lower bound on the number of collisions per window of size $cn/\lg\lg\lg n$, and so w.h.p. there are a total of $\Omega(n/\lg\lg\lg n) \cdot \Omega(\lg\lg n)$ collisions.\qed
\end{proof} 

\noindent We make the following observation about the argument above. Since a ``small'' $O(n/(\log\log n)^d)$  number of packets are succeeding in each such execution of the window of size $cn/\lg\lg\lg n$, it may appear that proving a stronger lower bound is possible by proceeding to larger and larger windows in the execution of LLB*. However, we highlight the dependency between $d$, $c$, and $\epsilon$; that is,  for $d>1$ to hold, we require $c<\epsilon\lg(e)/2 < 0.73 \epsilon \leq 0.73$, and thus we cannot show a larger number of collisions by proceeding as suggested.

For completeness, we state the following result for LLB (rather than LLB*), which follows from Theorem~\ref{claim:LLB} and the upper bound of $\mathcal{W}_{\mbox{\tiny LLB}} = O\left(\frac{n\lg\lg n}{\lg\lg\lg n}\right)$ slots in~\cite{BenderFaHe05}.

\begin{corollary}\label{claim:LLB-final}
For a single batch of $n$ packets, with high probability, $\mathcal{C}_{\mbox{\tiny LLB}} = \Theta\left(\frac{n\lg\lg n}{\lg\lg\lg n}\right)$. 
\end{corollary}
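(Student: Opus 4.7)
The plan is to prove the corollary by sandwiching $\mathcal{C}_{\mbox{\tiny LLB}}$ between matching $\Omega$ and $O$ bounds, both of which follow quickly from material already established in Section~\ref{sec:llb} together with one known result from~\cite{BenderFaHe05}.

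First I would establish the lower bound $\mathcal{C}_{\mbox{\tiny LLB}} = \Omega\left(\frac{n\lg\lg n}{\lg\lg\lg n}\right)$ by transferring Theorem~\ref{claim:LLB} from LLB* to LLB via the coupling already motivated at the start of Section~\ref{sec:llb}. Lemma~\ref{lem:LLB-star}, combined with the choice of initial window size discussed immediately after it, ensures that the $i^{th}$ contention window under LLB is no larger than the $i^{th}$ contention window under LLB*. Since the per-slot collision probability in a window of size $w$ with $m$ contending packets is monotonically increasing in $m$ and monotonically decreasing in $w$, any given window contributes at least as many expected collisions under LLB as under LLB*. The concentration arguments built on Corollary~\ref{cor:whp} and the surviving-packet accounting in Lemmas~\ref{lem:LLB-succ} and~\ref{lem:less-n} apply verbatim, so the lower bound proved for LLB* carries over.

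The upper bound is essentially free: every collision occupies exactly one CW slot, so deterministically $\mathcal{C}_{\mbox{\tiny LLB}} \leq \mathcal{W}_{\mbox{\tiny LLB}}$. Combining this with the known w.h.p.\ bound $\mathcal{W}_{\mbox{\tiny LLB}} = O\left(\frac{n\lg\lg n}{\lg\lg\lg n}\right)$ from~\cite{BenderFaHe05} yields $\mathcal{C}_{\mbox{\tiny LLB}} = O\left(\frac{n\lg\lg n}{\lg\lg\lg n}\right)$ w.h.p., which matches the lower bound and completes the argument.

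The main obstacle is really the coupling step: one has to be careful that ``LLB has smaller windows at each index'' does not by itself imply more collisions without also controlling the number of surviving packets. I would formalize this by defining a joint probability space in which both algorithms consume the same stream of uniform random bits and arguing by induction on the window index $i$ that the number of unfinished packets under LLB stochastically dominates that under LLB*, while the window size under LLB is simultaneously at most that under LLB*. Once that monotone coupling is in place, the lower bound on per-window collisions for LLB* transfers term by term to LLB, and the corollary follows.
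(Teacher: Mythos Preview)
Your proposal is correct and follows essentially the same approach as the paper: the paper simply remarks that the corollary follows from Theorem~\ref{claim:LLB} (the lower bound for LLB*, transferred to LLB via the window-size comparison of Lemma~\ref{lem:LLB-star}) together with the upper bound $\mathcal{W}_{\mbox{\tiny LLB}} = O\!\left(\frac{n\lg\lg n}{\lg\lg\lg n}\right)$ from~\cite{BenderFaHe05}. Your additional care about the coupling---tracking that LLB simultaneously has no-larger windows and no-fewer surviving packets than LLB* at each index---is more rigorous than what the paper spells out, but it fills in exactly the step the paper leaves implicit rather than taking a different route.
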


\vspace{-15pt}
\subsection{A Lower Bound on Collisions for LB}\label{sec:lb}

Finally, we turn to the analysis of LB. Here, we define the algorithm LB* where a contention window of size $w_i$ is used consecutively $(1/2)\lg(w_i)$ times, and then $w_{i+1} = 2w_i$ for $i\geq 0$. An analog to Lemma~\ref{lem:LLB-star} implies that a lower bound on the number of collisions experienced by LB* provides a lower bound on the number of collisions for LB; we omit this, although we give the important details of the remaining lower-bound argument below.

\begin{theorem}\label{thm:LB-claim}
For a single batch of $n$ packets, with high probability, $\mathcal{C}_{\mbox{\tiny LB}} = \Omega\left(\frac{n\lg n}{\lg\lg n}\right)$.  
\end{theorem}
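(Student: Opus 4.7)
The plan is to mirror the LLB argument from Section~\ref{sec:llb}. First, I would define LB$^*$ as the variant of \LB~in which each contention window of size $w_i$ is used consecutively $(1/2)\lg(w_i)$ times before doubling to $w_{i+1}=2w_i$. An analog of Lemma~\ref{lem:LLB-star} shows that, beyond a constant-size threshold, LB$^*$ doubles strictly faster than LB: the same manipulation using $w_{i+k} \leq e^{k/\lg w_i}\,w_i$ gives at least $\lfloor \ln(2)\lg w_i \rfloor$ windows to double under LB versus at most $\lceil (1/2)\lg w_i \rceil$ under LB$^*$, and $\ln(2) > 1/2$. By seeding LB$^*$ with a sufficiently large (constant) initial window, its $i^{\text{th}}$ (chronological) window is always at least as large as LB's $i^{\text{th}}$ window; larger windows yield fewer collisions per slot, so any high-probability lower bound on $\mathcal{C}_{\text{LB*}}$ transfers to $\mathcal{C}_{\mbox{\tiny LB}}$.

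Second, I would target the critical window size $cn/\lg\lg n$ for a sufficiently small constant $c>0$ and prove analogs of Lemmas~\ref{lem:LLB-succ} and~\ref{lem:less-n}. Given $\epsilon n$ packets, the probability that a slot in a window of size $cn/\lg\lg n$ is a singleton is, by a chain identical to Equations~\eqref{eqn:llb1}--\eqref{eqn:llb2},
$$\binom{\epsilon n}{1}\frac{\lg\lg n}{cn}\left(1-\frac{\lg\lg n}{cn}\right)^{\epsilon n - 1} \leq \frac{\epsilon\lg\lg n}{c}\,e^{-\epsilon\lg\lg n/(2c)} = O\!\left(\frac{1}{(\lg n)^{d}}\right)$$
for some constant $d>1$, provided $c<\epsilon\lg(e)/2$. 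Summing over the $cn/\lg\lg n$ slots and applying Corollary~\ref{cor:successes} with $t=\sqrt{n\ln n}$ yields $O\!\left(n/((\lg n)^{d}\lg\lg n)\right)$ successes per execution w.h.p. For the analog of Lemma~\ref{lem:less-n}, w.h.p. no packet finishes through windows of size up to $n/(4\ln n)$ by the argument in Equations~\eqref{eqn:less-n}--\eqref{eqn:llb3}. Between $n/(4\ln n)$ and $cn/\lg\lg n$ there are $O(\lg\lg n)$ doublings, and under LB$^*$ each intervening window is repeated $O(\lg n)$ times, so the total number of early completions is $O(n\lg\lg n / (\lg n)^{d-1}) = o(n)$ since $d>1$.

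Third, on any single execution of a window of size $cn/\lg\lg n$ with $\Theta(n)$ surviving packets, the collision probability per slot is
$$1 - \left(1-\frac{\lg\lg n}{cn}\right)^{\epsilon n} - \frac{\epsilon\lg\lg n}{c}\left(1-\frac{\lg\lg n}{cn}\right)^{\epsilon n - 1} = \Omega(1),$$
by a computation identical in form to Equations~\eqref{eqn:llb4}--\eqref{eqn:llb5}, with $\lg\lg n$ playing the role that $\lg\lg\lg n$ played there. Linearity of expectation and Corollary~\ref{cor:whp} with $t=\sqrt{n\ln n}$ then give $\Omega(n/\lg\lg n)$ collisions per execution w.h.p. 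Under LB$^*$ the critical window size is used $(1/2)\lg(cn/\lg\lg n) = \Omega(\lg n)$ consecutive times, and by the previous step $\Theta(n)$ packets survive each execution, so a union bound over the $\Omega(\lg n)$ repetitions yields $\mathcal{C}_{\text{LB*}} = \Omega(n\lg n / \lg\lg n)$ w.h.p., which transfers to $\mathcal{C}_{\mbox{\tiny LB}}$.

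The main obstacle is the same interlocking of constants encountered for LLB: the inequality $c < \epsilon\lg(e)/2$ is required so that the exponent in the success-probability bound pushes $d$ strictly above $1$ (guaranteeing early completions sum to $o(n)$), while $c$ must remain large enough that the per-slot collision probability on the critical window is bounded below by a positive constant. No genuinely new technique is needed beyond the LLB blueprint; the essential substitution is that LB$^*$ repeats each window $\Theta(\lg w)$ rather than $\Theta(\lg\lg w)$ times, and the critical window shifts from $cn/\lg\lg\lg n$ to $cn/\lg\lg n$, which together replace the $\lg\lg n/\lg\lg\lg n$ factor by $\lg n/\lg\lg n$ in the final count.
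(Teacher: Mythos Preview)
Your proposal is correct and follows essentially the same approach as the paper: define LB$^*$, establish the doubling comparison, target the critical window $cn/\lg\lg n$, and carry out analogs of Lemmas~\ref{lem:LLB-succ} and~\ref{lem:less-n} with one fewer iterated logarithm throughout, then multiply $\Omega(n/\lg\lg n)$ collisions per execution by $\Omega(\lg n)$ repetitions. One small remark on your ``main obstacle'' paragraph: there is in fact no tension in taking $c$ small for the collision-probability lower bound, since decreasing $c$ only \emph{increases} the expected load per bin and hence the collision probability; the only binding constraint on $c$ is $c<\epsilon\lg(e)/2$ from the success bound.
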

\begin{proof}
The  argument is similar to that of LLB. Our focus is on a window of size $cn/\lg\lg n$, 
and we state the three main steps of the argument here for completeness:\smallskip

\noindent{\bf (1)} First, the analog of Lemma~\ref{lem:LLB-succ} holds:\smallskip

\noindent {\it For any constant $\epsilon\leq 1$, consider $\epsilon n$ packets and a CW of size  $cn/\lg\lg n$ for a sufficiently small constant $c>0$. With high probability, $O(n/(\lg n)^{d})$ packets succeed in the CW for a constant $d>1$ depending on $\epsilon$ and $c$.}\smallskip

This is proved in the same way, where the number of iterated logarithms is reduced by one in Equations~\ref{eqn:llb1}-\ref{eqn:llb2}, and Corollary~\ref{cor:whp} applies again to give the w.h.p. result. \smallskip

\noindent{\bf (2)} Second, the analog of Lemma~\ref{lem:less-n} holds:\smallskip

\noindent{\it Consider a single batch of $n$ packets executing ~\LB. With high probability, $o(n)$ packets succeed prior to the first window of size $cn/\lg\lg n$ for a sufficiently small  constant $c>0$.}\smallskip

The argument remains essentially unchanged. Equations~\ref{eqn:less-n} to~\ref{eqn:llb3} in the proof of Lemma~\ref{lem:less-n} hold. Next, we ask: how many intervening unique-sized windows exist before reaching the first window of size $cn/\lg\lg n$? This is given by solving for $i$ in the following:
$$\frac{2^i n}{4\ln n} \leq \frac{cn}{\lg\lg n}$$
which yields (again) $i\leq k\lg\lg n$ for some positive constant $k$.

Pessimistically assume each such intervening window has size $cn/\lg\lg n$ (this can only reduce the number of collisions). By the above argument, we have $\epsilon n$ packets for some constant $\epsilon>0$, and for $c$ is sufficiently small, the equivalent of Lemma~\ref{lem:LLB-succ} above guarantees w.h.p. that each window results in  $O(n/(\lg n)^{d}\lg\lg n)$ successful packets for some $d>1$. Each such intervening window executes $O(\lg n)$ times. Therefore, the total number of packets finished over these windows is $O\left(\frac{n}{(\lg n)^{d-1}\lg\lg n}\right) = o(n)$.\smallskip

\noindent{\bf (3)} Third, the equivalent analysis in for deriving Equation~\ref{eqn:llb4} to Equation-\ref{eqn:llb5} holds, showing that the probability of a collision in this window is constant:
\begin{eqnarray}
& = & 1 - \sum_{k=0}^1 \binom{\epsilon n}{k}\hspace{0pt}\left(\frac{\lg\lg n}{cn}\right)^k \hspace{-3pt} \left(1- \frac{\lg\lg n}{cn}\right)^{\epsilon n-k}\\
& = & \Omega(1)
\end{eqnarray}

\noindent \noindent By Corollary~\ref{cor:whp}, the expected number of collisions is tightly bounded:
\begin{eqnarray*}
Pr\left( Z < E[Y] -  t \right)  & = & e^{-\frac{2t^2 \lg\lg n}{cn} }
\end{eqnarray*}
and letting $t = \sqrt{n\ln n}$ is sufficient to prove that the number of collisions in the window is $\Omega\left(\frac{n}{\lg\lg n}\right)$ with high probability.

By specification of LB, the window $w$ is executed $\lg(cn/\lg\lg n) = \Omega(\lg n)$ times. By Step 2, $O(n/(\lg n)^{d}$ $ \lg\lg n)= o(n)$ packets are succeeding in each such execution for some constant $d>1$ and for constant $c>0$ sufficiently small. Thus, there will be at least $\epsilon' n$ packets remaining in each execution of this window size for some sufficiently small constant $\epsilon' > 0$. This means we can apply the above lower bound on the number of collisions per window of size $cn/\lg\lg n$, and so w.h.p. there are a total of $\Omega(n/\lg\lg n) \cdot \Omega(\lg n)$ collisions.\qed
\end{proof}

For completeness, we state the following result, which follows from Theorem~\ref{thm:LB-claim} and the upper bound of $\mathcal{W}_{\mbox{\tiny LB}} = O\left(\frac{n\lg n}{\lg\lg n}\right)$ slots in~\cite{BenderFaHe05}.

\begin{corollary}\label{claim:LB-final}
For a single batch of $n$ packets, with high probability, $\mathcal{C}_{\mbox{\tiny LB}} = \Theta\left(\frac{n\lg n}{\lg\lg n}\right)$. 
\end{corollary}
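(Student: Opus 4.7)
The plan is to observe that Corollary~\ref{claim:LB-final} follows immediately by sandwiching $\mathcal{C}_{\mbox{\tiny LB}}$ between a matching upper and lower bound. The lower bound $\mathcal{C}_{\mbox{\tiny LB}} = \Omega\!\left(\frac{n\lg n}{\lg\lg n}\right)$ is exactly what Theorem~\ref{thm:LB-claim} establishes, so no further work is needed on that side.

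For the upper bound, I would use the trivial (but tight here) observation that every collision consumes one CW slot, so $\mathcal{C}_{\mbox{\tiny LB}} \leq \mathcal{W}_{\mbox{\tiny LB}}$ deterministically. Bender et al.~\cite{BenderFaHe05} have already shown the w.h.p. upper bound $\mathcal{W}_{\mbox{\tiny LB}} = O\!\left(\frac{n\lg n}{\lg\lg n}\right)$ on the total number of CW slots used by LB until all $n$ packets succeed. Composing these two facts yields $\mathcal{C}_{\mbox{\tiny LB}} = O\!\left(\frac{n\lg n}{\lg\lg n}\right)$ w.h.p., matching the lower bound asymptotically.

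Since the high-probability guarantees from Theorem~\ref{thm:LB-claim} and from~\cite{BenderFaHe05} each hold with probability at least $1 - 1/n^c$ for a tunable constant $c>1$, a union bound preserves the w.h.p. qualifier for the conjunction. Combining the two bounds then gives $\mathcal{C}_{\mbox{\tiny LB}} = \Theta\!\left(\frac{n\lg n}{\lg\lg n}\right)$ w.h.p., which is the claim.

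There is essentially no obstacle here: the only substantive work was already done in Theorem~\ref{thm:LB-claim}, and the matching upper bound is inherited for free from the CW-slot upper bound in prior work. The same observation is how the analogous Corollary~\ref{claim:LLB-final} is obtained from Theorem~\ref{claim:LLB}, so the proof is best written as a one- or two-sentence remark rather than a detailed argument.
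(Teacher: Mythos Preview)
Your proposal is correct and matches the paper's approach exactly: the paper states that the corollary follows from Theorem~\ref{thm:LB-claim} together with the upper bound $\mathcal{W}_{\mbox{\tiny LB}} = O\!\left(\frac{n\lg n}{\lg\lg n}\right)$ from~\cite{BenderFaHe05}, which is precisely your sandwich via $\mathcal{C}_{\mbox{\tiny LB}} \leq \mathcal{W}_{\mbox{\tiny LB}}$. Your explicit mention of the union bound for the w.h.p.\ qualifier is a nice touch that the paper leaves implicit.
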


\vspace{-10pt}


\begin{table*}[t] \vspace{10pt}

\begin{center}

{

\begin{tabular}{  |M{2.2cm}|M{4cm}|M{4cm}|M{4cm}|  }

\hline

\rowcolor{LightCyan} {\bf Algorithm $A$} &  {\boldmath{$\mathcal{W}_A$}}  & {\boldmath{$\mathcal{C}_A$}}  &    {\boldmath{{\bf $\mathcal{T}_A$}}} \\
\hline

{\bf BEB}  &  $\Theta(n\log n)$   & $\Theta(n)$    &   $\Theta\left(n\cdot D + n\log n\right)$ \\  

{\bf LLB}  &  $\Theta\left( \frac{n\log\log n}{\log\log\log n} \right)$   &   $\Theta\left( \frac{n\log\log n}{\log\log\log n} \right)$  & $\Theta\left( \frac{n\log\log n}{\log\log\log n}\cdot D \right)$  \\  

{\bf LB} &   $\Theta\left( \frac{n\log n}{\log\log n} \right)$  &   $\Theta\left( \frac{n\log n}{\log\log n} \right)$  & $ \Theta\left(\frac{n\log n}{\log\log n}\cdot D\right)$   \\  

{\bf STB} &  $\Theta\left(n\right)$   &  $\Theta\left(n\right)$    &  $\Theta\left(n\cdot D\right)$  \\  
\hline

\end{tabular}

}\vspace{-0pt}\caption{New asymptotic bounds on $\mathcal{C}_A$ and $\mathcal{T}_A$. The bounds for $\mathcal{W}_A$ have been previously established~\cite{BenderFaHe05,Gereb-GrausT92,GreenbergL85}, and the bounds for $\mathcal{C}_A$ are presented in this work.}\label{table:new-model}\vspace{-15pt}

\end{center}

\end{table*}


\subsection{Upper Bounding Collisions in STB}\label{sec:stb-analysis}\vspace{-5pt}

Recall from Section~\ref{sec:single-batch} that STB is non-monotonic and executes over a doubly-nested loop. The outer loop sets the current window size $w$ to be double that used in the preceding outer loop. For each such window in the outer loop, the inner loop executes over $\lg w$ windows of decreasing size: $w, w/2, w/4, ..., 1$, and we refer to this sequence of windows as a \defn{run}.

\begin{theorem}
For a single batch of $n$ packets, with high probability, $\mathcal{C}_{\mbox{\tiny STB}}=\Omega(n)$. 
\end{theorem}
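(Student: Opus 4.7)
The plan is to pinpoint a single inner window in STB's execution where the density of remaining packets exceeds a constant bounded away from $1$, so that a constant fraction of the slots in that one window are collisions in expectation, and then to apply Corollary~\ref{cor:whp} to promote the bound to w.h.p. No amortization across later runs or smaller inner windows is needed; one well-chosen window already contributes $\Omega(n)$ collisions.

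First I would locate the target outer-loop iteration. Let $w^{*}$ be the first outer-loop window size exceeding $n/16$, so that $w^{*} \in (n/16,\, n/8]$ (since outer windows double). Every run with outer window $w$ uses at most $w + w/2 + w/4 + \dots + 4 < 2w$ slots in total, so summing the geometric series over all strictly earlier outer windows $w^{*}/2, w^{*}/4, \ldots$ yields at most $2w^{*} \le n/4$ slots before the run at $w^{*}$. Since each slot admits at most one success, at most $n/4$ packets can have succeeded so far, leaving at least $k \ge 3n/4$ packets contending at the start of the run at $w^{*}$.

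Next I would zoom in on the very first inner window of this run, which has size $w^{*}$. Let $X_j = 1$ indicate that slot $j$ in this window contains $\ge 2$ packets. An inclusion computation analogous to Equations~\ref{eqn:llb4}--\ref{eqn:llb5}, combined with Fact~\ref{fact:taylor-upper}, gives
\[
\Pr[X_j = 1] \;\ge\; 1 - \bigl(1 + k/w^{*}\bigr)\,e^{-k/w^{*}},
\]
which is bounded below by a positive constant because $k/w^{*} \ge 6$. Setting $X = \sum_{j=1}^{w^{*}} X_j$ and applying linearity of expectation yields $E[X] = \Omega(w^{*}) = \Omega(n)$.

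Finally I would invoke Corollary~\ref{cor:whp} directly on this single window. With $w^{*} = \Theta(n)$ and deviation $t = \sqrt{n \ln n}$, the bound $\Pr\!\bigl[X < E[X] - t\bigr] \le e^{-2t^{2}/w^{*}} \le n^{-\Omega(1)}$ establishes $X = \Omega(n)$ with high probability, giving $\mathcal{C}_{\mbox{\tiny STB}} = \Omega(n)$. The main technical care required is constant-tracking: choosing the threshold (here $n/16$) large enough that $k/w^{*}$ is bounded away from $1$, and observing that because STB starts from a constant initial window and doubles, the iteration at $w^{*}$ is actually reached. I do not expect any substantive obstacle beyond this bookkeeping.
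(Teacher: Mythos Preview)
Your proposal is correct and follows essentially the same approach as the paper: locate a single window of size $\Theta(n)$ reached while $\Theta(n)$ packets still remain, show each slot has constant collision probability, obtain $\Omega(n)$ expected collisions by linearity, and upgrade to w.h.p.\ via Corollary~\ref{cor:whp} with $t=\Theta(\sqrt{n\ln n})$. The only differences are cosmetic: the paper targets the outer window of size $n/4$ with $\ge n/2$ surviving packets and bounds $\Pr[X_j=1]$ from below via the sum over $\ell\ge 2$ using Fact~\ref{fact:taylor-lower}, whereas you target the first outer window exceeding $n/16$ and bound the complement event (0 or 1 balls) using Fact~\ref{fact:taylor-upper}.
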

\begin{proof}
Consider a run that starts with a window of size $n/8$. The total number of slots up to the end of this window (including  all corresponding runs) is less than $n/2$; therefore, more than $n/2$ packets have not finished by this point. 

In the window $W$ of size $n/4$ that starts the next run, we can calculate the probability of a collision in a slot $j$. Let the indicator random variable $Z_j=1$ if slot $j$ in this window has a collision; $Z_j=0$ otherwise. 
\begin{eqnarray*}
Pr(Z_j=1) &\geq& \sum_{\ell=2}^{n/2} \binom{n}{\ell}\left(\frac{1}{n/4}\right)^{\ell}  \left(1- \frac{1}{n/4}\right)^{(n/2)-\ell}\\
& \geq & \sum_{\ell=2}^{n/2}\left( \frac{n}{\ell}\right)^{\ell} \left(\frac{4}{n}\right)^{\ell}e^{-\frac{(4/n)((n/2) - \ell)}{1-4/n} }\\
&  =  & \sum_{\ell=2}^{n/2}\left( \frac{n}{\ell}\right)^{\ell} \left(\frac{4}{n}\right)^{\ell}e^{-(8/n)((n/2) - \ell) }\\
&  \geq  & \sum_{\ell=2}^{n/2}\left( \frac{n}{\ell}\right)^{\ell} \left(\frac{4}{n}\right)^{\ell} \left(\frac{e^{8\ell/n}}{e^4}\right) \\
&  \geq  & \sum_{\ell=2}^{n/2}\left( \frac{4}{\ell}\right)^{\ell}  \left(\frac{1}{e^5}\right) \\
& = & \Omega(1)
\end{eqnarray*}

\noindent where the second line follows from Fact~\ref{fact:taylor-lower} and a lower bound on binomial coefficient. Therefore, by linearity of expectation, over the $n/4$ bins of the window $W$,  the expected number of collisions  is $\Omega(n)$.

In order to obtain a bound with high probability,  we use Corollary~\ref{cor:whp}.  Letting $Z$ denote the number of collisions in $w$, then:
$$Pr\left( Z < E[Z] - t\right)  \leq  e^{-2t^2/(n/4)}.$$
\noindent Letting $t = \sqrt{dn(\ln n)/8}$ for any constant $d\geq 1$ implies that:
$$Pr\left( Z < E[Z] - t\right)  \leq  e^{-d\ln n} = 1/n^d$$
\noindent and yields the result with high probability.\qed
\end{proof}

Combining the above argument with the known upper bound of $\mathcal{W}_{\mbox{\tiny STB}} = O(n)$~\cite{BenderFaHe05}, the following result follows:
\begin{corollary}\label{claim:LB-final}
For a single batch of $n$ packets, with high probability, $\mathcal{C}_{\mbox{\tiny STB}} = \Theta(n)$. 
\end{corollary}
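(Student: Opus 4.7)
The plan is to establish the claimed $\Theta(n)$ bound by combining two high-probability facts: the matching lower bound $\mathcal{C}_{\mbox{\tiny STB}} = \Omega(n)$ proved in the preceding theorem, and an upper bound $\mathcal{C}_{\mbox{\tiny STB}} = O(n)$ that I will derive essentially for free from the known makespan bound of \cite{Gereb-GrausT92,GreenbergL85}.

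For the upper bound, the key observation is structural rather than probabilistic. By the very definition used throughout Section~\ref{sec:analyzing-collisions}, a collision is an event that occupies a single CW slot in which two or more packets transmit simultaneously. Hence every collision corresponds injectively to a distinct CW slot, and so deterministically $\mathcal{C}_{\mbox{\tiny STB}} \leq \mathcal{W}_{\mbox{\tiny STB}}$. The existing analysis of STB guarantees that w.h.p. $\mathcal{W}_{\mbox{\tiny STB}} = O(n)$ for a single batch of $n$ packets (as summarized in Table~\ref{table:makespan}), which immediately transfers to $\mathcal{C}_{\mbox{\tiny STB}} = O(n)$ w.h.p.

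To finish, I would take a union bound over the two high-probability events: the $\Omega(n)$ lower bound just established in the theorem preceding the corollary, and the $O(n)$ upper bound inherited from $\mathcal{W}_{\mbox{\tiny STB}}$. Since each failure probability is at most $1/n^{d}$ for a tunable constant $d>1$, the combined event still holds with high probability, yielding $\mathcal{C}_{\mbox{\tiny STB}} = \Theta(n)$ as claimed.

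I do not anticipate any significant obstacle in this proof; it is a routine corollary. The only mild subtlety worth stating explicitly is the deterministic inequality $\mathcal{C}_A \leq \mathcal{W}_A$, which is what allows the prior makespan result to be imported without any further probabilistic work. This same observation is what justifies the analogous corollaries for LLB and LB earlier in the section.
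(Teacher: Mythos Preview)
Your proposal is correct and mirrors the paper's own argument exactly: the paper simply states that the corollary follows by combining the $\Omega(n)$ lower bound from the preceding theorem with the known upper bound $\mathcal{W}_{\mbox{\tiny STB}} = O(n)$, implicitly using the same deterministic inequality $\mathcal{C}_A \leq \mathcal{W}_A$ that you make explicit. Your write-up is actually more detailed than the paper's one-line justification, but the underlying approach is identical.
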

\vspace{-15pt}


\begin{figure*}[t]\vspace{-1cm}
\captionsetup[subfigure]{labelformat=empty}
\centering
\begin{subfigure}{1.0\textwidth} 
\vspace{30pt}
\includegraphics[width=1.0\textwidth]{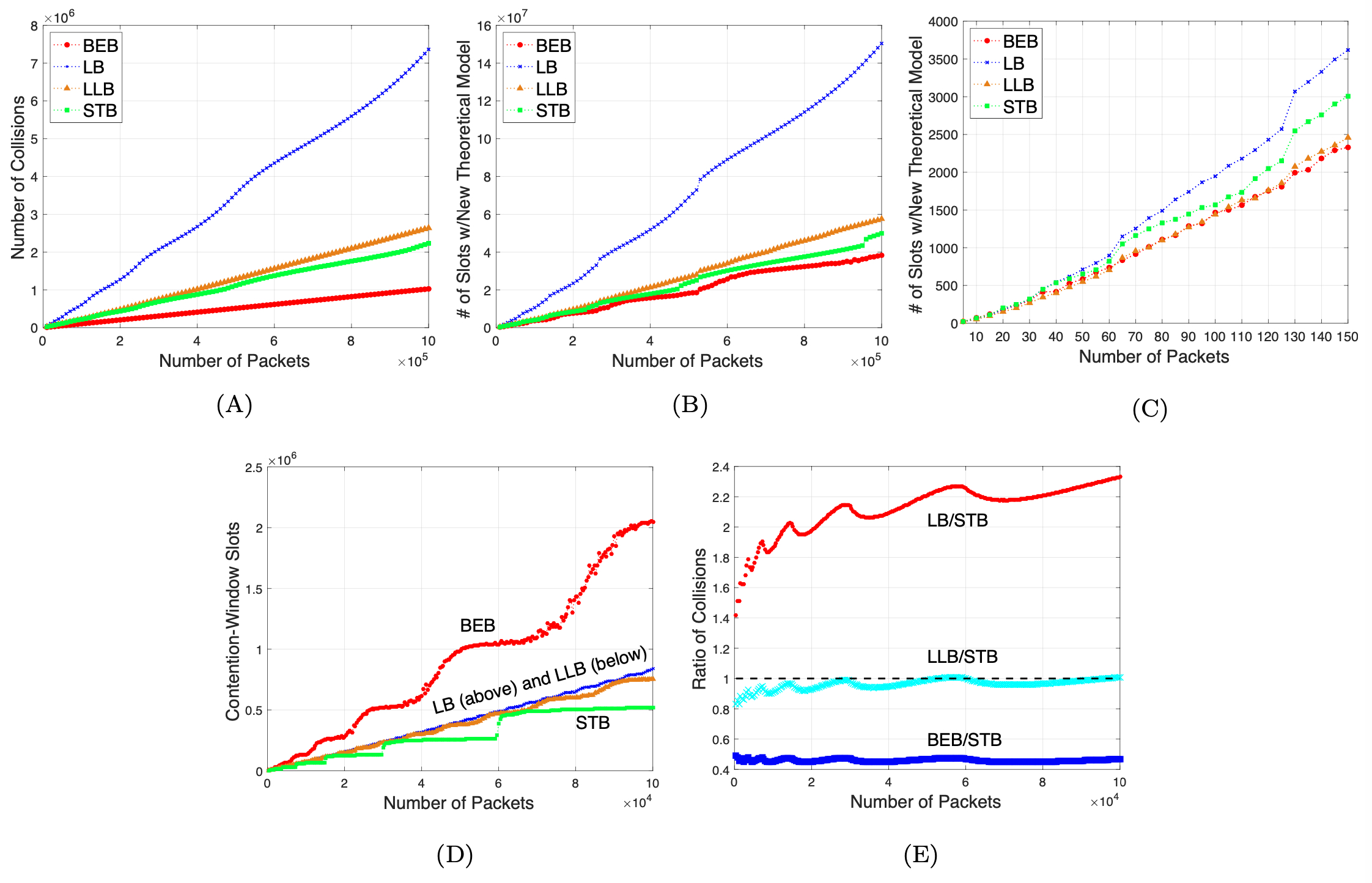} 
\end{subfigure}
 \caption{Plots (A) and (B) depict the number of collisions and slots using our new theoretical model, respectively. The data is generated with our \theoreticalSimulator using $A2^*$ for up to $10^6$  packets, with increments of $10^4$ and $50$ trials per increment. Plot (C) depicts the same simulation, but for the number of packets ranging between $5$ and $150$, with increments of $5$ and $50$ trials per increment. Plots (D) and (E) show results from our \theoreticalSimulator using the original $A2$, with $200$ trials per $n\leq 10^5$, in increments of $400$: (A) CW slots with median values plotted, (B) ratio of median number of collisions for BEB, LLB, and LB versus STB.  Confidence intervals are omitted, since they are negligible.} \label{fig:java-sims-large}
\vspace{5pt}
\end{figure*}

\subsection{Applying Our New Theoretical Model}\label{sec:as-total}

Plugging our results for $\mathcal{C}_{A}$ into our new theoretical model from Section~\ref{sec:theory}, we derive bounds for $\mathcal{T}_A$, and  these are presented in Table~\ref{table:new-model}.  Clearly, the performance of these backoff algorithms depends on $D$. For example, if $D=O(1)$, then LB, LLB, and STB remains asymptotically superior to BEB. On the other hand, this is not true for sufficiently large $D$. Thus,  in our theoretical model, $D$ dictates the mix between traditional makespan, as measured by CW slots, and the cost of collisions.

We believe that this model may  apply generally to network scenarios where $D$ is a significant parameter. Here,  for our WiFi setting, we believe that $D$ is large enough to explain the discrepancy observed in our experiments, and we further speculate below that $D=\Omega(\log n)$.\medskip\smallskip

\noindent{\bf Scaling of {\boldmath{$D$}}.} In WiFi networks,  $D$ can capture the time corresponding to the preamble, ACK timeouts, and transmission delay, as discussed in Section~\ref{sec:hidden}.  The preamble remains fixed for our setting. In contrast, both ACK timeouts and transmission delay do scale with network size. The former scales slowly; a common heuristic is to increase the timeout by $1\mu s$ for every $300m$-increase in distance between the sender and  receiver.  

How does transmission delay scale? If we assume it increases with packet size, then this can be broken into two parts: the amount of payload, and the amount of control information (for example, the header of a packet). Focusing on the control information, note that the packet header must contain enough bits to uniquely address all $n$ stations. Therefore, as $n$ grows, the number of bits required to address devices must also increase, and we may assume packet size, and thus transmission delay, scales as the logarithm of $n$.  

Given this discussion, for large values of $n$, one may argue that $D$ ought to be treated as a slowly growing function of $n$, such as $\Omega(\log n)$.

\medskip\smallskip

\noindent{\bf Implications.} For $D=\Omega(\log n)$, both $\mathcal{T}_{LB}$ and $\mathcal{T}_{LLB}$ exceed $\mathcal{T}_{BEB}$ asymptotically.

\medskip
\begin{tcolorbox}[standard jigsaw, opacityback=0]
\noindent{\bf Result 5.~} {\it Bounds derived under our new theoretical model imply that LLB and LB \hspace{-1pt}should \hspace{-1pt}underperform \hspace{-1pt}BEB \hspace{-1pt}for \hspace{-1pt}sufficiently \hspace{-1pt}large \hspace{-1pt}$n$ \hspace{-1pt}and \hspace{-1pt}$D$.} \end{tcolorbox}

Can we verify this via simulation? To see the scaling behavior, we execute these algorithms using our \theoreticalSimulator using our new assumption $A2^*$\hspace{-3pt}, with $D=\log_2(k)$, where $k$ is the current batch size being simulated. We explore up to $n\leq 1,000,000$, with batch sizes that increase by $10,000$.  

Figure~\ref{fig:java-sims-large}(A) illustrates the number of collisions. We witness the relative behavior predicted by our new theoretical bounds on $\mathcal{C}_{LLB}$ and $\mathcal{C}_{LB}$.  In particular, LLB and (especially) LB have many more collisions than BEB. Using our new theoretical model, the number of slots is plotted in Figure~\ref{fig:java-sims-large}(B); again, we see the relative performance predicted by our theoretical bounds on $\mathcal{T}_{LLB}$ and $\mathcal{T}_{LB}$; that is, LLB and LB underperform BEB.  We also note that this looks significantly different than the behavior given by the old theoretical model; this is illustrated by contrasting with Figure~\ref{fig:java-sims-large}(D), which depicts the number of CW slots (i.e., $\mathcal{W}_{A}$ under the original theoretical model).

Figure~\ref{fig:java-sims-large}(D) also resolves an issue noted in Section~\ref{sec:CW-slots-exp}, where LLB was seen to incur more CW slots than LB, despite the former being theoretically superior along this metric. Results from the \theoreticalSimulator using the original $A2$ show that indeed LLB outperforms LB for sufficiently large $n$.

What about at small scale? We do not necessarily expect our model to match the results of the \detailedSimulator, but it is worth examining. Scoping into the range of $n\leq 150$, we observe in Figure \ref{fig:java-sims-large}(C) an ordering that is close to---but not exactly a match with---Figures~\ref{fig:total-time}(A) and (B). In particular, the only difference is the ordering of LB and STB. However, we note that LB and STB are very similar under the \detailedSimulator, and they remain close here also under the \theoreticalSimulator using $A2^*$. We suspect that a constant factor in the analysis may explain the difference at this small scale.

\medskip\smallskip

\noindent{\bf Payload Size.} In the above discussion, we speculated about the asymptotics of these backoff algorithms via the amount of control information as a function of $n$. However, payload size may also be sufficient to cause significant deviation from the original A2, and our model captures this also. We observed in Section~\ref{sec:totaltime} that increasing the packet size favored BEB over LLB, the latter being the closest competitor to BEB in our experiments. 

To investigate this further,  we use the \detailedSimulator to examine the relative performance of these two algorithms as packet size increases; this is illustrated in Figure~\ref{fig:packet-size-increase}. As the packet size grows, LLB performs increasingly worse than BEB. We fit a linear regression model of the value LLB - BEB on the number of packets. This fitting model implies that when the payload size  increases by $100$B, the average increase in \totaltime for LLB is roughly $700 \mu s$ more than the increase experienced by BEB. The increase rate is statistically significant ($p$-value less than $0.001$). 

\begin{figure}[t]
\centering
\includegraphics[width=0.5\textwidth, trim = 0.1in 0in 1in 2.1in, clip, scale=0.6]{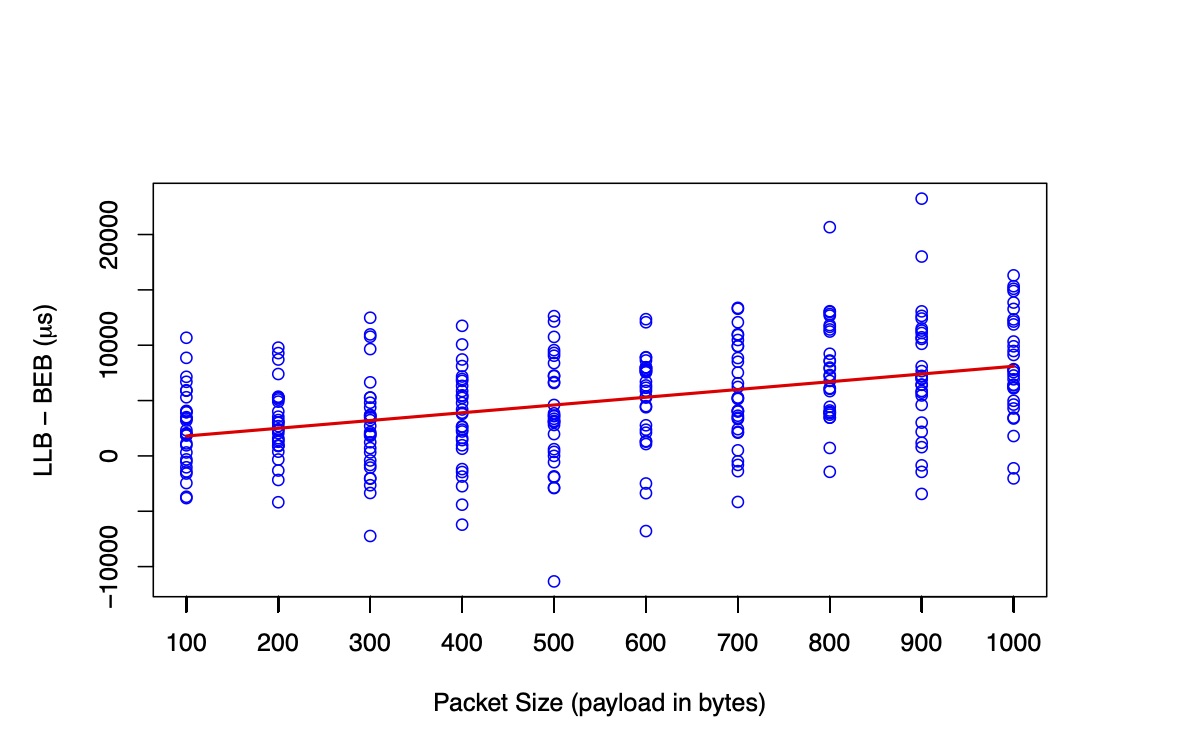}
\caption{The difference in \totaltime between LLB and BEB for $n=150$ as packet size increases, with 30 trials per packet size.}\label{fig:packet-size-increase}
\end{figure}

\smallskip

\noindent{\bf Performance of STB.} At this point, the elephant in the room is STB. For $D=\Omega(\log n)$, we observe that   $\mathcal{T}_{BEB} = \Theta(\mathcal{T}_{STB})$, and so we should expect these algorithms to behave similarly, up to a constant factor.  We also note that, by our analysis of $\mathcal{T}_{A}$, STB should be outperforming LLB and LB.

Yet, in our experiments with the \detailedSimulator, STB has the worst \totaltime for $n=10$ to $150$. However, we claim that this is only occurring at small scale, and we offer the following explanation of this behavior.

Observe that, empirically, STB has roughly twice as many collisions as BEB. This shows up experimentally in our Figure~\ref{fig:java-sims-large}(A). But we can show this more clearly in Figure~\ref{fig:java-sims-large}(E), where we observe that the number of collisions for STB is larger than BEB by roughly a factor of $2$ over this large range of $n$. Note that the plot of BEB/STB is (roughly) flat, as expected from our asymptotic analysis of collisions. 

This implies that two things should be true for large $n$. First, STB should approximate the performance of BEB and eventually outcompete LLB and LB. Second, STB should always be doing worse than BEB given the ratio of collisions. Indeed, we observe both of these behaviors to be true in Figure~\ref{fig:java-sims-large}(B). Therefore, it is possible for STB to do poorly by comparison to LLB and LB at small scale, but to outcompete for larger values of $n$, which  aligns with the performance predicted by our bounds on $\mathcal{T}_{\mbox{\tiny STB}}$. 


\begin{figure*}[t] 
\captionsetup[subfigure]{labelformat=empty}
\centering
\begin{subfigure}{1.0\textwidth} 
\includegraphics[width=1.0\textwidth]{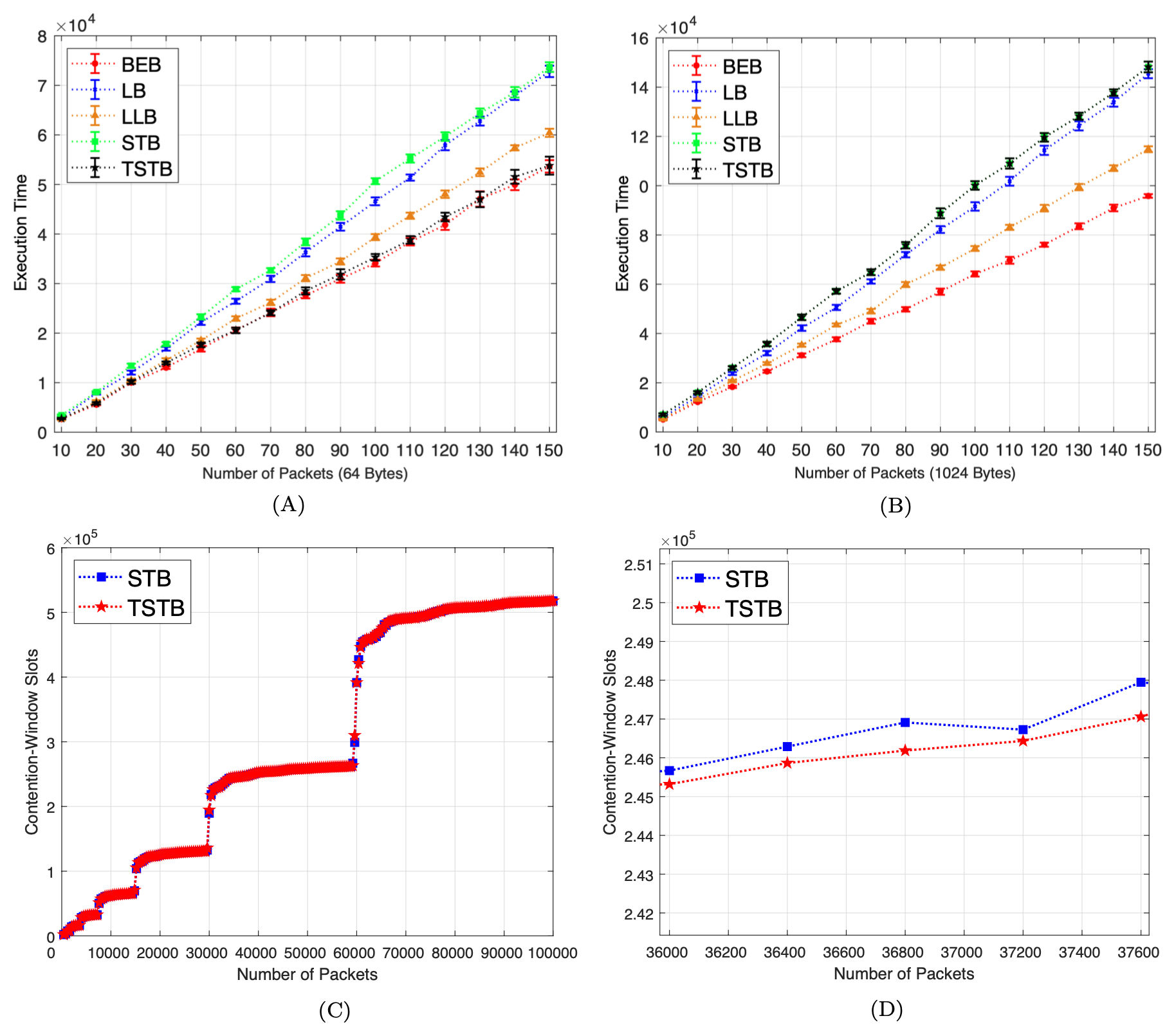} 
\end{subfigure}
\caption{Median values are reported: (A)  \Totaltime from the \detailedSimulator with $30$ trials for each value of $n$ with $64$B payload with $2m$ distance, and TSTB truncated after the first window; (B) \Totaltime  from the \detailedSimulator  with $30$ trials for each value of $n$ with $1024$B payload with $2m$ distance, and TSTB truncated after the fifth window; (C) CW slots via \theoreticalSimulator with $100$ trials for each value of $n$ ranging from $400$ to $100000$; (D) A zoomed-in portion of plot (C) illustrating the minor improvement of TSTB over STB in terms of CW slots. Bars represent $95\%$ confidence intervals, and they are omitted in (C) and (D), since they are negligible.} \label{fig:makespan-tstb}
\end{figure*} 
 
\vspace{-10pt}
\section{Discussion}\label{sec:discussion} \vspace{-2pt}

In this section, we return to the topic of TSTB's performance, which we deferred in Section~\ref{sec:single-batch}. Then, we discuss the applicability of our findings in the context of other protocols and networks. \vspace{-10pt}


\subsection{Performance of TSTB}\label{sec:tstb}

Recall from Section~\ref{sec:single-batch} that TSTB proceeds like STB, except that its runs are truncated, and the degree of this truncation depends on a constant $c>0$. We implemented TSTB in order to compare its performance against the other algorithms.

To begin our discussion, we consider two extreme behaviors that depend on the value $c$. When $c$ is sufficiently small, the run is truncated after the first window, and TSTB behaves identically to BEB. Conversely, when $c$ is sufficiently large, the run is not truncated at all, and TSTB behaves identically to STB. These two cases are illustrated in Figures~\ref{fig:makespan-tstb} (A) and (B), respectively. Given this view of TSTB, it is not surprising that it cannot outcompete BEB in terms of \totaltime over the small range of $n$-values investigated in our experiments with the \detailedSimulator. We explored a variety of values for $c$, and none gave an improvement.

Of course, in terms of CW slots, when TSTB  behaves closely to STB, it will outperform BEB. Indeed, over large values of $n$, the number of CW slots used by TSTB is smaller, although not by a significant amount. This behavior is illustrated in  Figures~\ref{fig:makespan-tstb} (C) and (D), with the latter demonstrating the minor improvement by TSTB. While we do not include this analysis, it is easy to see that TSTB will also incur $\Theta(n)$ collisions. 

These observations---especially those regarding \totaltime---are the reason for postponing discussion of TSTB until this point.


\subsection{Scope of Our Findings}\label{sec:interpret} 

In this section, we consider to what extent our findings are an artifact of IEEE 802.11g, and whether LB, LLB, STB might do better inside other wireless protocols.\smallskip

\noindent{\bf IEEE 802.11g uses a truncated BEB, is this significant?} In our experiments, the maximum congestion-window size is $4096$ which differs from the abstract model where no such upper bound exists. However, even for $n=150$, this maximum is never reached during an execution of BEB and this does not seem to have any noticeable impact on the trend observed in Figures~4(A) and 4(B). 
\smallskip\smallskip

\noindent{\bf What if smaller packets are used?} During a collision, the time lost to transmitting would be reduced. In an extreme case, if the transmission of a packet fit within a slot, this would align more closely with A2.

Due to overhead, packet size has a lower bound in IEEE 802.11. Additionally, in NS3, there is a 12-byte payload minimum which translates into a minimum total packet size of $76$ bytes for our experiments.\footnote{This is set within the \texttt{UdpClient} class of NS3.} For this packet size, we witnessed the same qualitative behavior is observed in terms of CW slots and \totaltime.  

Alternatives to 802.11 might see more significant decreases. However, there is a tradeoff for any protocol. A smaller packet implies a reduced payload given the need for control information (for routing, error-detection, etc.) and this means that  throughput is degraded.\smallskip\smallskip

\noindent{\bf What if the ACK-timeout duration is reduced or acknowledgements are removed altogether?} This would also bring us closer to A2. In our experiments, the ACK-timeout is $75\mu$s (recall Section~\ref{sec:experimental}) and values significantly below this threshold will lead a station to consider its packet lost before the ACK can be received. This results in unnecessary retransmissions and, ultimately, poor throughput.

Totally removing acknowledgements (or some form of feedback) is difficult in many settings since, arguably, they are critical to {\it any} protocol that provides reliability; more so when transmissions are subject to disruption by other stations over a shared channel. \smallskip\smallskip

\noindent{\bf Would using TCP rather than UDP significantly alter these findings?} We conjecture that using TCP may strengthen the effects we observe. A collision will register as a lost packet, and this can trigger congestion control, which would result in additional delays.  Additionally, packet headers are larger with TCP than UDP, which should increase the transmission delay. Both of these could increase the value of $D$ in our new theoretical model. That said, this is a prediction, and an empirical investigation would be needed to determine the degree to which our new theoretical model applies.\medskip\smallskip


\noindent{\bf To what extent do these findings generalize to other protocols?} We do {\it not} claim that our findings hold for all protocols. If (a) sufficiently small packets are feasible {\it and} (b) reliability is not paramount, performance should align better with theoretical guarantees derived from using assumption A2. 

We {\it do} claim that our findings should apply to several other protocols. Examples include members of the IEEE 802.11 family,  IEEE 802.15.4, and IEEE 802.16 (WiMAX); these employ some form of backoff  for data transmission, incur overhead from control information, and use feedback via acknowledgements or a timeout to determine success or failure.  This is a significant slice of current wireless standards.  

Bluetooth (IEEE 802.15.1) is another popular standard where we might also expect our findings to apply.  Here, devices form piconets using a master/slave architecture, and the master coordinates the channels used for communication by its slave devices; thus collisions typically arise between different, colocated piconets. In contrast to WiFi, IEEE 802.15.4, and WiMAX, Bluetooth has limited dependence on backoff, where it is used by a device to join a piconet and pair with a master (see Section 8.10.7.2 of the specification~\cite{802.15.1-standard}). Furthermore, this backoff procedure is basic, using a single window of size $1024$ during the pairing procedure, regardless of the number of devices. 

That said, with a peak data rate in the low Mbps~\cite{beard:wireless}, even sending $128$ bytes will incur roughly $10^3 \mu s$.  Additionally, acknowledgements are used, with Bluetooth employing a standard automatic repeat request (ARQ) process, along with forward error correction. For these reasons, (a) and (b) do not seem to apply, and we should expect a cost incurred due to collisions, which could potentially be analyzed under our new model.

\smallskip

\begin{tcolorbox}[standard jigsaw, opacityback=0]
\noindent{\bf Result 6.} {\it Designing contention-resolution algorithms using assumption A2 seems likely to translate into poor performance in practice for a range of wireless protocols.}\end{tcolorbox}

A setting where the  original  theoretical model may be valid is networks of multi-antenna devices. If a collision can be detected more efficiently, perhaps by a separate antenna,  the delay can be reduced. Canceling the signal at the sending device so that other transmissions (that would cause the collision) can be detected is challenging. However, this  is possible (for an interesting application, see~\cite{Gollakota:2011:THY:2018436.2018438}) and such schemes have been proposed using multiple-input multiple-output (MIMO) antenna technology~\cite{6963622,6962145}.

Finally, we note that future standards may satisfy (a) and (b). A possible setting is the Internet-of-Things (IoT);  for example,~\cite{7524360} characterizes  IoT transmissions as ``small'' and ``intermittent, delay-sensitive, and short-lived". To reduce delay, the authors argue for removing much of the control messaging used by traditional MAC protocols. Therefore, this setting seems more closely aligned with A2. However, using this same logic,~\cite{7524360} also argues for the removal of any backoff-like contention-resolution mechanism. So, these standards  are  in flux and we may indeed see protocols that avoid the issues we identify here. 

 
\section{Related Work}\label{sec:related}

A preliminary version of this work appeared~\cite{anderton:is}. Here, we have revised and expanded on much of the material. Experiments regarding CW slots and \totaltime for all algorithms  have been redone using different distances between nodes in order to verify that the results hold. The superior performance of BEB with respect to \totaltime is explored further via additional  experiments measuring the number of collisions (Section~\ref{sec:hidden}). We propose a new theoretical model to capture the cost of collisions and our analytical arguments are presented in full (Section~\ref{sec:theory}). Finally, we also investigated a new algorithm, TSTB, in order to measure its performance (Section~\ref{sec:tstb}). \medskip


\noindent{\bf Theoretical Results.} There is a vast body of theoretical literature on the problem of contention resolution. Exponential backoff has been studied under the case where arrival times of packets are stochastic~(see~\cite{GoldbergMa96a,GoldbergMaPaSr00,HastadLeRo87,RaghavanUp95,Capetanakis:2006}). Guarantees on stability are known~\cite{Al-Ammal2000,Al-Ammal2001,Goodman:1988:SBE:44483.44488}, and under saturated conditions~\cite{bianchi:performance}. 

Moving away from stochastic arrival times, batched arrivals have received significant attention~\cite{BenderFaHe05,zhou2019singletons}. Results on makespan  are also known for batched arrivals when packets can have different sizes~\cite{bender:heterogeneous}. In this case, BEB is again found to offer poor makespan, although the model deviates significantly from wireless communication protocols. In particular, large packets being transmitted can be interrupted at any point by other stations that transmit on the channel without performing carrier-sense multiple access.

Energy efficiency is important to multiple access in many low-power wireless networks~\cite{jurdzinski:energy,chang:exponential,bender:contention,BenderKPY18}.  The communication channel may be subject to adversarial disruption (for examples, see~\cite{king:conflict,DBLP:journals/dc/KingPSY18,richa:competitive-j,gilbert:near,Bender:2015:RA:2818936.2818949,DBLP:journals/topc/GilbertZ15,DBLP:conf/podc/ChenZ20,DBLP:conf/spaa/ChenZ19,DBLP:conf/spaa/GilbertZ13}), several results address the challenge of multiple access~\cite{awerbuch:jamming,richa:jamming2,richa:jamming3,richa:jamming4,bender:how,ogierman:competitive,richa:efficient-j,aldawsari:adversarial,Anantharamu2011,DBLP:journals/jcss/AnantharamuCKR19,Tan2014,bender:how,BenderFGY19}.

Regarding the time required for a single successful transmission, a lower-bound of $\Omega(\log\log n)$ is known~\cite{willard:loglog}. Recent work has shown how predictions regarding the number of packets can improve performance~\cite{gilbert:contention}. Additionally, different channel-feedback models have been investigated~\cite{fineman:contention2,fineman:contention,bender2020contention}.  The wake-up problem is closely-related, since it addresses how long it takes for a collection of  devices to receive a transmission that wakes up all devices~\cite{chlebus:better,chlebus:wakeup,chrobak:wakeup,Chlebus:2016:SWM:2882263.2882514,Jurdzinski:2015:CSM:2767386.2767439}. 

Deterministic contention-resolution algorithms have been considered~\cite{DBLP:conf/icdcs/MarcoKS19,DBLP:conf/podc/Kowalski05}. Additionally, deterministic  protocols for the closely-related problem of broadcast on a multiple-access channel have also received significant attention~\cite{ChlebusKoRo06,ChlebusKoRo12,anantharamu:adversarial-opodis}. In contrast, the performance of adaptive algorithms---where packets make use of channel feedback---and the power of randomized versus deterministic approaches has been examined. When players have access to a global clock deterministic, non-adaptive contention-resolution has been investigated~\cite{doi:10.1137/140982763}. Conversely, without a global clock, both adaptive and non-adaptive results are known~\cite{DeMarco:2017:ASC:3087801.3087831}. 

Several results incorporate issues of scheduling or time-constrained access to a shared resource~\cite{bienkowski:distributed,anta:scheduling,bienkowski2016randomized,DBLP:conf/spaa/AgrawalBFGY20}. 

Finally, a number of approaches attempt to estimate the the number of devices contending for the channel in order to start with a larger initial CW~\cite{hajek:decentralized,kelly:decentralized,Gerla:1977,cali:dynamic,cali:design,bianchi:kalman,bender:contention,BenderKPY18}. Preliminary results suggest that such size-estimation algorithms may be promising~\cite{anderton:is}. We omit this content here given that the results presented are already substantial, and the performance of size-estimation results deserve an in-depth investigation that we believe falls outside the scope of the current work.

\medskip

\noindent{\bf Performance under 802.11.} In terms of the performance of backoff algorithms in practice, the performance of IEEE 802.11 has received significant attention~\cite{Ni:survey,Kuptsov201437,5191039,1543687,LiHSC03,duda:understanding,5963276}. There are several results that focus specifically on the performance of BEB within IEEE 802.11, and we summarize those that are most closely related.

Under continuous traffic, windowed backoff schemes are examined in~\cite{sun:backoff} with a focus on the tradeoff between throughput and fairness. The authors focus on polynomial backoff and demonstrate via analysis and NS2 (the predecessor to NS3) simulations that quadratic backoff is a good candidate with respect to both metrics.

Work in~\cite{1424043} addresses saturated throughput (each station always has a packet ready to be transmitted) of exponential backoff; roughly, this is the maximum throughput under stable packet arrival rates. Custom  simulations are used to confirm these findings. 

In~\cite{6859627}, the authors propose backoff algorithms where the size of the contention window is modified by a small constant factor based on the number of successful transmissions observed. NS2  simulations are used to demonstrate improvements over BEB within 802.11 for a steady stream of packets (i.e. non-bursty traffic). 

Lastly, in~\cite{saher:log}, the authors examine a variation on backoff where the contention window increases multiplicatively by the logarithm of the current window size (confusingly, also referred to as ``logarithmic backoff''). NS2  simulations imply an advantage to their variant over BEB within IEEE 802.11, again for non-bursty traffic.

\section{Concluding Remarks and Open Questions}

We have offered evidence that a commonly-used model for designing contention-resolution algorithms is not adequately accounting for the cost of collisions in the domain of WiFi and related wireless communication standards. Interesting questions remain.  Experimentally, we would like to know for certain   whether these findings hold for other popular protocols, such as Bluetooth, or under different transport layer protocols, such as TCP. We hypothesize that the results should be similar, but experimental confirmation would be useful. In terms of analytical work, we have argued for why collisions have an impact on performance, but what is the optimal tradeoff between collisions and CW slots? Assuming that this tradeoff is known, can we design algorithms that leverage this information? Can we analyze the more general case where packets do not arrive in batches, but may instead  arrive at arbitrary times?\\

\noindent{\bf Acknowledgements.} We are grateful to the reviewers for their comments, which greatly improved our manuscript.





\end{document}